\documentclass[10pt,journal]{IEEEtran}

\hyphenation{lists}

\makeatletter
\def\ifundefined{\@ifundefined}
\makeatother

\usepackage{epsfig}
\usepackage{amsmath}
\usepackage{amssymb} 
\usepackage{graphicx}
\usepackage{amsthm}
\usepackage{mathrsfs}
\usepackage{bbold}
\usepackage[ruled]{algorithm2e}
\usepackage{multirow}
\usepackage{cite}

\newcommand{\bitm}{\begin{itemize}}
\newcommand{\eitm}{\end{itemize}}
\newcommand{\be}{\begin{equation}}
\newcommand{\ee}{\end{equation}}
\newcommand{\bea}{\begin{eqnarray}}
\newcommand{\eea}{\end{eqnarray}}
\newcommand\ba[1]{\left[ \begin{array}{#1}}
\def\ea{\end{array}\right]}
\def\nn{\nonumber\\}

\newcommand{\bfi}{\begin{figure}}
\newcommand{\efi}{\end{figure}}

\newcommand{\mat}[1]{\begin{bf} #1 \end{bf}}
\newcommand{\Mat}[1]{\mat{#1}}
\renewcommand{\Pr}[1]{\mbox{\rm{Pr}}\left\{ #1 \right\}}

\newcommand{\Ev}[1]{\left\{ #1 \right\}}

\newcommand{\Q}{\Mat{Q}}
\renewcommand{\kappa}{n}
\newcommand{\Del}{\pmb{\Lambda}}

\newcommand{\ISI}{\ell}

\DeclareMathOperator{\diag}{diag}
\DeclareMathOperator{\rank}{rank}

\newcommand{\define}{\stackrel{\triangle}{=}}
\newcommand{\rand}[1]{#1}
\newcommand{\D}{\Delta}
\renewcommand{\zeta}{\Omega}

\newcommand{\Real}{\mathbb{R}}
\newcommand{\F}{F}
\newcommand{\I}{\Mat{I}}
\newcommand{\bI}[1]{^{[#1]}}

\newcommand{\BigMin}[1]{\mathop{\min_{\mat{#1} \in \Sym}}_{#1_0 \neq \rand{B}_t}}
\newcommand{\BigMax}[2]{\mathop{\max_{\mat{#1} \in \Sym}}_{#1_0 #2 \rand{B}_t}}
\DeclareMathAlphabet{\mathbbb}{U}{bbold}{m}{n}

\newcommand{\E}{\mathbb{E}}

\newcommand{\Var}{\mbox{Cov}}

\renewcommand{\S}{\Mat{S}}
\newcommand{\s}{\mat{s}}
\newcommand{\PSE}{\Pr{\bigcap_{i=1}^n \Ev{\rand{B}_{t_i}  \neq \rand{A}_{t_i}}}}

\newcommand{\muY}{\pmb{\mu}}
\newcommand{\nuX}{\pmb{\nu}}
\newcommand{\G}{\mat{G}}

\renewcommand{\d}{\delta}
\newcommand{\dmat}{\pmb{\d}}

\newcommand{\U}{\pmb{\rand{U}}}

\newcommand{\Gam}{\Gamma}
\newcommand{\m}{\eta}

\newcommand{\Kw}{\Mat{K}_{\pmb{\rand{W}}}}
\newcommand{\Kv}{\Mat{K}_{\pmb{\rand{V}}}}
\newcommand{\Iv}{^{\dagger}}

\newcommand{\Atk}{\pmb{\rand{A}}_{\mat{t}_1^\kappa}}
\newcommand{\Wtk}{\pmb{\rand{W}}_{\mat{t}_1^\kappa}}
\newcommand{\Rtk}{\pmb{\rand{R}}_{\mat{t}_1^\kappa}}
\newcommand{\Xtk}{\pmb{\rand{X}}_{\mat{t}_1^\kappa}}
\newcommand{\Ytk}{\pmb{\rand{Y}}_{\mat{t}_1^\kappa}}
\newcommand{\Vtk}{\pmb{\rand{V}}_{\mat{t}_1^\kappa}}

\newcommand{\atk}{\mat{a}_1^\kappa}
\newcommand{\ati}{\mat{a}}

\newcommand{\sig}{\sigma}
\newcommand{\randb}[1]{\pmb{\rand{#1}}}
\newcommand{\Alp}{\pmb{\alpha}}
\newcommand{\Beta}{\pmb{\beta}}
\newcommand{\Sym}{\mathcal{M}}
\newcommand{\alp}{\alpha}

\renewcommand{\a}{a}
\newcommand{\at}{a'}
\renewcommand{\L}{m}
\newcommand{\gm}{\mat{g}}
\newcommand{\hO}{\mat{h}_0}

\newcommand{\SetLine}{\SetAlgoLined}
\newcommand{\linesnumbered}{\LinesNumbered}
\newcommand{\nocaptionofalgo}{\NoCaptionOfAlgo}

\newcommand{\1}{\mathbbb{1}}
\newcommand{\0}{\mathbbb{0}}
\renewcommand{\vec}[1]{{\mat{#1}}}

\newcommand{\fn}{}
\newtheorem{thm}{Theorem}  
\newtheorem{lem}{Lemma}        
\newtheorem{cor}{Corollary}

\newtheorem{pro}{Proposition}

\newtheorem{defn}{Definition}       
\newtheorem{rem}{Remark}       

\begin{document}

\title{Reliability Distributions of Truncated Max-log-map (MLM) Detectors Applied to Binary ISI Channels}

\author{\IEEEauthorblockN{Fabian Lim and Aleksandar Kav\v{c}i\'c}
\thanks{F. Lim is with the Research Laboratory of Electronics, Massachusetts Institute of Technology, 77 Massachusetts Ave, Cambridge, MA 02139, USA (e-mail: flim@mit.edu).}
\thanks{A. Kav\v{c}i\'c is with the Department of Electrical Engineering, University of Hawaii at Manoa, 2540 Dole Street, Honolulu, HI 96822, USA. (e-mail: alek@hawaii.edu).}
\thanks{This work was performed when F.~Lim was at the University of Hawaii, and supported by the NSF under grant number CCF-1018984. Parts of this work has been presented at the IEEE International Symposium on Information Theory, St Petersburg, Russia, July 2011}
}

\maketitle

\begin{abstract}
The max-log-map (MLM) receiver is an approximated version of the well-known, Bahl-Cocke-Jelinek-Raviv (BCJR) algorithm. The MLM algorithm is attractive due to its implementation simplicity. In practice, sliding-window implementations are preferred, whereby truncated signaling neighborhoods (around each transmission time instant) are considered. In this paper, we consider binary signaling sliding-window MLM receivers, where the MLM detector is truncated to a length-$m$ signaling neighborhood. Here, truncation is used here to ease the burden of analysis. For any number $n$ of chosen times instants, we derive exact expressions for both i) the \emph{joint} distribution of the MLM symbol reliabilities, and ii) the \emph{joint} probability of the erroneous MLM symbol detections.

We show that the obtained expressions can be efficiently evaluated using Monte-Carlo techniques. The most computationally expensive operation (in each Monte-Carlo trial) is an eigenvalue decomposition of a size $2mn$ by $2mn$ matrix. 
The proposed method handles various scenarios such as correlated noise distributions, modulation coding, etc.
\end{abstract}
\begin{IEEEkeywords}
detection, intersymbol inteference, probability, reliability, Viterbi algorithm
\end{IEEEkeywords}

\markboth{Lim \MakeLowercase{\textit{et al.}}: {Reliability Distributions of Truncated Max-Log-Map (MLM) Decoder applied to Binary ISI channels}}{}
{
\renewcommand{\v}{\rand{V}}

\section{Introduction}

The intersymbol interference (ISI) channel has been widely studied. Optimal detection schemes for the ISI channel, consider input-output sequences, rather than individual symbols~\cite{Forney}. Sequence detectors such as the \emph{Viterbi} detector, only compute hard decisions~\cite{Viterbi}. However, modern coding techniques often benefit from detection schemes that also compute \emph{symbol reliabilities}, also known as \emph{soft-outputs}, \emph{log-likelihood ratios}~\cite{Turbo,Ma,Lim}. Some well-known detectors that perform this task include the \emph{soft-output Viterbi algorithm} (SOVA)~\cite{SOVA}, the \emph{Bahl-Cocke-Jelinek-Raviv (BCJR) algorithm}~\cite{BCJR}, and the \emph{max-log-map} (MLM) detector~\cite{EquivSOVA}. There has been recent interest in the analysis of the MLM detector. The marginal symbol error probability has been derived for a 2-state convolutional code in~\cite{Yoshi}; this was further extended for convolutional codes with constraint length two in~\cite{Lent}. Approximations for the MLM reliability distributions have been obtained in~\cite{Reggiani,Avu}. 

In this paper, we consider the use of an MLM receiver for binary signaling over an ISI channel. In particular we consider its \emph{sliding-window} implementation. A MLM receiver is termed to be $\L$-truncated, if it only considers a signaling window of length $\L$ around the time instant of interest. The $m$-truncation is used here to falicitate the analysis of the MLM receiver. For the $m$-truncated MLM receiver, considering any number $n$ of chosen time instants, we derive  \emph{exact, closed-form} expressions for \emph{both} i) the \emph{joint} distribution of the symbol reliabilities, and ii) the \emph{joint} probability that the detected symbols are in error. While past work considered only marginal distributions, we provide analytic expressions for joint MLM receiver statistics. Our derivation follows from a simple observation.

\textbf{Notation}: Bold fonts are used to distinguish both vectors and matrices (e.g., denoted~$\mat{a}$ and $ \Mat{A}$, respectively) from scalar quantities (e.g., denoted~$a$). Next, {random} quantities are denoted as follows. Scalars are denoted using upper-case italics (e.g.,, denoted $A$) and vectors denoted using upper-case bold italics (e.g., denoted~$\randb{A}$). Note that we do not reserve specific notation for random matrices. Throughout the paper both $t$ and $\tau$ are used to denote time indices. Sets are denoted using curly braces, e.g., $\{a_1, a_2, a_3, \cdots\}$. Also, both $\alp$ and $\beta$ are used for auxiliary notation as needed. Finally, the maximization over the components of the size-$n$ vector $\mat{a} = [a_1,a_2,\cdots, a_n]^T$, may be written either explicitly as $\max_{i \in \{1,2,\cdots, n \}} a_i$, or concisely as $\max \mat{a}$. Events are denoted in curly brackets, e.g., $\Ev{A\leq a}$ is the event where $A $ is at most $a$. The probability of the event $\Ev{A\leq a}$ is denoted $\Pr{A\leq a}$. The letter $F$ is reserved to denote probability \emph{cumulative} distribution functions, i.e., $F_A(a) = \Pr{A \leq a}$. The expectation of $A$ is denoted as $\E\{A\}$.

\section{The MLM Algorithm}
\begin{figure*}[!t]
	\centering
		\includegraphics[width=.7\linewidth]{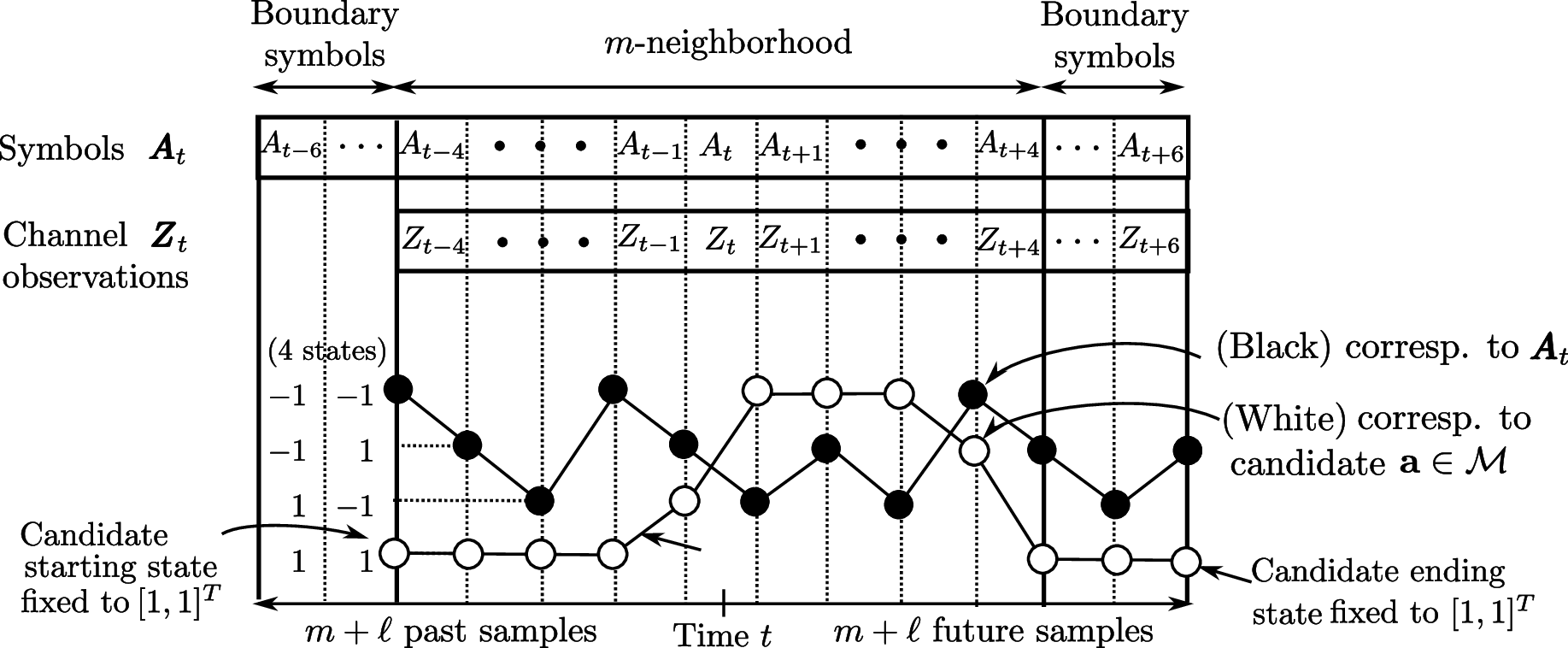}

	\caption{The time evolution of the channel states of an $m$-truncated MLM detector for binary signaling over an ISI channel with memory length $\ell$, where $m = 6$ and $\ell = 2$.}
	\label{fig:Trellis}
		\vspace*{-10pt}
\end{figure*}

Let a random sequence of symbols drawn from the set $\{-1, 1\}$, denoted as $\cdots, \rand{A}_{-2},\rand{A}_{-1}, \rand{A}_{0},\rand{A}_{1}, \rand{A}_{2}, \linebreak[1] \cdots$, be transmitted over an ISI channel with memory $\ell$ characterized by \emph{channel coefficients} $h_0, h_1, \cdots, h_\ISI$. The binary signaling ISI \emph{channel output} sequence, denoted $\cdots, \rand{Z}_{-2}, \rand{Z}_{-1}, \rand{Z}_{0}, \linebreak[1]\rand{Z}_{1}, \rand{Z}_{2}, \cdots$, satisfies the following input-output relationship\vspace*{-7pt}
\bea
\rand{Z}_t \!\! &=& \!\!\sum_{i=0}^\ISI  h_i \rand{A}_{t-i} - \rand{W}_t, \label{eqn:chan1}
\eea
and the channel noise samples $\cdots, \rand{W}_{-2},\rand{W}_{-1}, \rand{W}_{0}, \linebreak[1] \rand{W}_{1}, \rand{W}_{2}, \cdots$ are assumed to be zero-mean and jointly Gaussian distributed (we \emph{do not} assume they are \emph{independent}). Note that the Gaussian noise sample $\rand{W}_t$ in (\ref{eqn:chan1}) is subtracted (as opposed to being usually added in the literature) for purposes of obtaining neater expressions in the sequel. Clearly subtraction incurs no loss in generality, as the Gaussian distribution is symmetric about its mean. The ISI \emph{channel state} at time $t$ equals the (length-$\ell$) vector of input symbols $[\rand{A}_{t-\ell+1},\rand{A}_{t-\ell+2},\cdots, \rand{A}_{t}]^T$. The total number of possible states is $2^\ell$, exponential in the memory length $\ell$.

At time instant $t$, the $\L$-truncated MLM detector considers the neighborhood of $2\L+\ISI+1$ channel outputs $\randb{Z}_t \define [\rand{Z}_{t-\L},\rand{Z}_{t-\L+1},\cdots, \rand{Z}_{t+\L+\ISI}]^T$. Let $\randb{A}_t$ denote the symbol neighborhood that contains the following $2(\L+\ISI)+1$ input symbols
\bea
	\randb{A}_t \define [\rand{A}_{t-\L-\ISI}, \rand{A}_{t-\L-\ISI+1},\cdots, \rand{A}_{t+\L+\ISI}]^T. \label{eqn:sig_vect}
\eea
Both $\randb{A}_t$ and $\randb{Z}_t$ are depicted in Figure \ref{fig:Trellis}. 
Let $\0_{i,j}$ denote a matrix of size $i$ by $j$, whereby every entry of $\0_{i,j}$ equals zero. 
Let $\mat{h}_i$ denote the following length-$(2\L+\ISI+1)$ vector\vspace*{-5pt}
\bea
	\mat{h}_i &\define& [\0_{1,\L+i},h_0,h_1, \cdots, h_\ISI, \0_{1,\L-i}]^T, \label{eqn:h_i}
\eea
where $i$ can take values $|i| \leq m$, note that $\0_{1,i}$ is a row vector of length $i$ containing only zeros. Let both $\Mat{H}$ and $\Mat{T}$ denote the size $2\L+\ell+1$ by $2(\L+\ell)+1$ matrices given as
\newcommand{\Zo}{\mbox{ $\0$  } }
\bea
	\mat{H} \!\!\!\!\! &\define& \!\!\![\0_{2\L+\ISI+1,\ISI}, {\mat{h}_{-\L},\mat{h}_{-\L+1},\cdots,\mat{h}_{\L}},\0_{2\L+\ISI+1,\ISI}], \nn
	\mat{T} \!\!\!\!\! &\define& \!\!\!  
	\left[\begin{array}{ccc} \mbox{\quad\quad$\mat{T}_1$\quad\;} \\ \0_{2\L+1,\ISI} \end{array}, \right.
	\0_{2\L+\ISI+1,2\L + 1}, 
		\left.\begin{array}{ccc} \0_{2\L+1,\ISI}\\\mbox{\quad\quad$\mat{T}_2$\quad\;}  \end{array}, \right]
			   \label{eqn:HandT}
\eea
where the two $\ell$ by $\ell$ \emph{submatrices} $\mat{T}_1$ and $\mat{T}_2$ equal
\bea
\mat{T}_1 \!\!\!\! &=& \!\!\!\!\!\!
		\left[
					\begin{array}{@{}c@{}} 
						\begin{array}{*{12}{@{\hspace{.5ex}}c@{ \hspace{.5ex} }}}
							h_\ISI & h_{\ISI-1} & \cdots & h_1 \\
							&  h_\ISI &  & \vdots \\
							&         &  \ddots &  \vdots  \\
							&         &   & h_\ISI \\ 
						\end{array} \\ 
					\end{array}
		\right],
		\mat{T}_2 = \!\!
		\left[
					\begin{array}{@{}c@{}} 
						
						\begin{array}{*{12}{@{\hspace{.5ex}}c@{ \hspace{.5ex} }}}
							h_0 \\
							\vdots & \ddots \\
							h_{\ISI-2} & \cdots & h_0 \\
							h_{\ISI-1} & \cdots & h_1 & h_0
						\end{array} 
					\end{array}
		\right]  \nonumber
\eea

Using (\ref{eqn:HandT}), rewrite $\randb{Z}_t \define [Z_{t-m},Z_{t-m+1}, \cdots, Z_{t+m+\ISI}]^T$ using (\ref{eqn:chan1}) into the following form
\bea
	\randb{Z}_t &=& \left( \Mat{H} + \Mat{T} \right) \randb{A}_t - \randb{W}_t, \label{eqn:isi_chan}
\eea
where $\randb{W}_t$ denotes the neighborhood of noise samples 
\bea
	\randb{W}_t \define [\rand{W}_{t-\L},\rand{W}_{t-\L+1},\cdots,\rand{W}_{t+\L+\ISI}]^T \label{eqn:Wt}
\eea

Let $\1_\ell$ denote a vector of length $\ell$, with all its entries equal to 1. Let $\Sym$ denote the set of $\L$-truncated MLM {candidate} sequences, defined as
\bea
\!\!\!\!\!\!	\Sym \!\!\!\! &\define& \!\!\!\!\left\{ \mat{\a} \in \{-1,1\}^{2(\L+\ISI)+1}   : \a_i = 1 \mbox{ for all } |i| > \L \right\}.   \label{eqn:Sym}
\eea
Each candidate $\mat{\a} \in \Sym$ has boundary symbols equal to $1$, i.e., each $\mat{\a}$ has the form  \vspace*{-7pt}
\[
\mat{\a} = [\1_\ell^T, \a_{-\L}, \a_{-\L+1},\cdots, \a_{\L}, \1_\ell^T]^T.
\]
Alternatively, the boundary symbols can be specified to be any sequence of choice in the set $\{-1,1\}^\ell$; here we choose $\1_\ell$ for the boundary sequence to simplify exposition. An example of a candidate $\mat{a} \in \Sym$ is illustrated in Figure \ref{fig:Trellis}. The reason for fixing the boundary symbols of the candidates $\mat{a} \in \Sym$ \emph{a priori} (to some chosen sequence), is so as to initialize them to some values (as the boundary symbols of the transmitted sequence $\randb{A}_t$ are \emph{unknown} to the detector). The start/end states of $\randb{A}_t$ (colored \emph{black}), are shown (see Figure \ref{fig:Trellis}) to be different from the start/end states of the candidate $\mat{a} \in \Sym$ (colored \emph{white}). 

\renewcommand{\at}{\bar{\mat{a}}}

Let the sequence $\cdots, \rand{B}_{-2},\rand{B}_{-1}, \linebreak[1] \rand{B}_{0},\rand{B}_{1}, \rand{B}_{2}, \cdots$ denote \emph{symbol decisions} on the channel inputs $\cdots, \rand{A}_{-2},\rand{A}_{-1}, \rand{A}_{0},\rand{A}_{1}, \rand{A}_{2}, \cdots$. Let $\1$ denotes an all-ones vector of non-specified length. In the following let $|\mat{a}|$ denote the Euclidean norm of the vector $\mat{a}$. For each time instant $t$, define the sequence $\randb{B}\bI{t}$ as \vspace*{-5pt}
\bea
\randb{B}\bI{t} &\define& \arg \min_{\mat{\a}\in\Sym} |\pmb{\rand{Z}}_t  - (\Mat{H} + \Mat{T}) \mat{\a} |^2, \nn
&=& \arg \min_{\mat{\a}\in\Sym} |\pmb{\rand{Z}}_t  -  \Mat{T}\1 - \Mat{H}\mat{\a}|^2. \label{eqn:B1}
\eea
The symbol decision $\rand{B}_{t}$ on channel input $\rand{A}_{t}$ is obtained from $\randb{B}\bI{t}$ by setting $\rand{B}_{t} \define \rand{B}\bI{t}_0$, where $\rand{B}\bI{t}_0$ denotes the $0$-th element of the candidate $\randb{B}\bI{t} \in \Sym$. Clearly $\randb{B}\bI{t}$ only has length $2(m+\ell)+1$ and therefore does not equal the MLM bit detection sequence $\cdots, \rand{B}_{-2},\rand{B}_{-1},  \rand{B}_{0},\rand{B}_{1},\linebreak[1]  \rand{B}_{2}, \cdots$; however, note that each symbol $B_t$ is obtained from each $\randb{B}\bI{t}$. Each sequence $\randb{B}\bI{t}$ is obtained by comparing the squared Euclidean distances of each candidate $\Mat{H}\mat{\a}$ from the received neighborhood $\pmb{\rand{Z}}_t -  \Mat{T}\1$, see (\ref{eqn:B1}).

\renewcommand{\fn}{\footnote{The relaxation of this assumption is discussed in the latter-half of the upcoming Subsection \ref{ssect:33}, where we allow some of the probabilities $\Pr{\randb{A}_t=\ati}$ to equal zero, i.e., in the case of modulation coding.}}

In addition to computing \emph{hard}, i.e., $\{-1,1\}$, symbol decisions $B_t$, the $\L$-truncated MLM also computes a symbol \emph{reliability} sequence $\cdots, \rand{R}_{-2},\linebreak[1]\rand{R}_{-1},\linebreak[1] \rand{R}_{0},\rand{R}_{1},\linebreak[1] \rand{R}_{2},\cdots$. Consider the following log-likelihood approximation (see~\cite{EquivSOVA})
\bea
	\log \frac{\Pr{A_t = B_t | \randb{Z}_t}}{\Pr{A_t \neq B_t| \randb{Z}_t}} \!\!\!\!
   &=& \!\!\!\!
	\log \frac{\sum_{\mat{a} \in \Sym : \; a_0 = B_t} \Pr{\randb{Z}_t|A_t = \mat{a} }}{\sum_{\mat{a} \in \Sym : \; a_0 \neq B_t} \Pr{\randb{Z}_t| A_t = \mat{a}}  } \nn 
	&\approx& \!\!\!\!\!\!  \mathop{\min_{\mat{a} \in \Sym}}_{a_0 \neq \rand{B}_t} \frac{1}{2\sig^2} |\pmb{\rand{Z}}_t - \Mat{T}\1 - \Mat{H}\vec{\a} |^2 \nn
	&& \!\!\!\!  - \mathop{\min_{\mat{a} \in \Sym}}_{a_0 = \rand{B}_t} \frac{1}{2\sig^2}  |\pmb{\rand{Z}}_t - \Mat{T}\1 - \Mat{H}\vec{\a} |^2, \label{eqn:llr}
\eea
where the first equality assumes\fn~uniform signal priors , i.e. $\Pr{\randb{A}_t=\ati}=2^{-2(m+\ell)-1}$, see (\ref{eqn:sig_vect}). Denote $\sig^2$ to be the worst-case noise variance
\bea
 \sig^2 & \define & \sup_{t \in \mathbb{Z}} \E \{W_t^2\}, \label{eqn:snr}
\eea 
and assume that $\sig^2$ is bounded, i.e., $\sig^2< \infty$. If $\rand{W}_t$ is stationary, then $\sig^2 = \E\{\rand{W}_t^2 \}$. We want to set the ($m$-truncated MLM) reliability $R_t$ to equal the log-likelihood approximation (\ref{eqn:llr}), written in the following form. Denote the difference in the obtained squared Euclidean distances 
\bea
 \D(\vec{\a},\vec{\at}) &=&\D(\mat{\a},\mat{\at}; \randb{Z}_t)   \nn
  &\define& \!\!\!\!\! |\pmb{\rand{Z}}_t - \Mat{T}\1 - \Mat{H}\vec{\a} |^2 - |\pmb{\rand{Z}}_t - \Mat{T}\1 - \Mat{H}\vec{\at} |^2 ,\label{eqn:D}
\eea
where both $\mat{\a}$ and $ \mat{\at} $ are arbitrary sequences in $\{-1,1\}^{2(\L+\ISI)+1}$. Recalling (\ref{eqn:B1}), we write $R_t$ as follows.

\begin{defn} \label{defn:relt}
The non-negative $\L$-truncated MLM reliability $\rand{R}_t$ is defined as
\bea
\rand{R}_t &\define& \BigMin{\a} \frac{1}{2\sig^2} \D(\vec{\a},\pmb{\rand{B}}\bI{t}) , \label{relt}
\eea
where $\D(\vec{\a},\pmb{\rand{B}}\bI{t}) \geq 0$, is the difference in the obtained squared Euclidean distances corresponding to candidates $\mat{a},\pmb{\rand{B}}\bI{t} \in \Sym$, and $ \sig^2 $ is the noise variance (\ref{eqn:snr}).
\end{defn}

Note that $\D(\vec{\a},\pmb{\rand{B}}\bI{t}) \geq 0$ for all $\mat{a} \in \Sym$, simply because $\pmb{\rand{B}}\bI{t}$ achieves the minimum squared Euclidean distance amongst all candidates in $\Sym$, see (\ref{eqn:B1}).

\section{Key Observation and Main Result}

In the first subsection, we describe an important \emph{key observation}, of which the derivation of the main result is based on. In the second subsection, the main result provides closed-form expressions for the i) joint reliability distribution $F_{R_{t_1}, R_{t_2},\cdots, R_{t_n}}(r_1,r_2,\cdots, r_n)$, and ii) joint symbol error probability $\PSE$, for $n$ time instants $t_i$. A Monte-Carlo procedure is give to evaluate these closed-forms. In the third subsection, an efficient method to run the Monte-Carlo is discussed.

\subsection{Key observation}

For all $t$, define $\rand{X}_t$ and $\rand{Y}_t$ as
\bea
\rand{X}_t &\define& \mathop{\max_{\mat{a} \in \Sym}}_{a_0 \neq \rand{A}_t} \frac{1}{4} \D(\pmb{\rand{A}}_t,\vec{\a}), \nn
\rand{Y}_t &\define& \mathop{\max_{\mat{a} \in \Sym}}_{a_0 = \rand{A}_t} \frac{1}{4}  \D(\pmb{\rand{A}}_t,\vec{\a})  \geq 0 , \label{eqn:XaY}
\eea
where $\D(\randb{A}_t,\mat{a})$, see (\ref{eqn:D}) equals the difference of, the squared Euclidean distances corresponding to $\randb{A}_t$, and a candidate $\mat{a} \in \Sym$, respectively. Note that $\rand{Y}_t \geq 0 $, because there must exist a candidate $\mat{\a} \in \Sym$ that satisfies $\D(\pmb{\rand{A}}_t,\vec{\a}) =0$, see (\ref{eqn:D}); this particular candidate $\mat{a} \in \Sym$ satisfies $a_i= A_{t+i}$ for all values of $i$ satisfying $|i| \leq m$.

\begin{pro}[Key Observation] \label{relprop}
The $\L$-truncated MLM reliability $\rand{R}_t$ in (\ref{relt}) satisfies
\bea
\rand{R}_t &=& \frac{2}{\sig^2}|\rand{X}_t-\rand{Y}_t|, \label{eqn:relprop}
\eea
where both random variables $\rand{X}_t$ and $ \rand{Y}_t$ are given in (\ref{eqn:XaY}). \hspace*{\fill}\IEEEQEDopen
\begin{proof} \rm
Scale (\ref{relt}) by $\sig^2/2$ and write
\bea
\frac{\sig^2}{2} \cdot \rand{R}_t \!\! &=& \BigMin{a} \frac{\D(\vec{a},\pmb{\rand{A}}_t)}{4}  + \frac{\D(\pmb{\rand{A}}_t,\pmb{\rand{B}}\bI{t})}{4} \nn
&=& \!\!\!\! \left(- \BigMax{a}{\neq} \frac{\D(\pmb{\rand{A}}_t,\vec{a})}{4} \right)  + \frac{\D(\pmb{\rand{A}}_t,\pmb{\rand{B}}\bI{t})}{4}. \label{relt2}
\eea
To obtain the last equality in (\ref{relt2}), we used the relationship $\D(\randb{A}_t,\mat{\a}) = - \D(\vec{\a},\randb{A}_t)$, see (\ref{eqn:D}). Recall the symbol decision $B_t \define \rand{B}_0\bI{t}$, where $\randb{B}\bI{t}$ is defined in (\ref{eqn:B1}). Because $\rand{B}_t$ is either $-1$ or $1$, we have either $\rand{B}_t \neq \rand{A}_t$ or $\rand{B}_t = \rand{A}_t$. Consider the former case $\rand{B}_t \neq \rand{A}_t$, in which (\ref{relt2}) reduces to 
\bea
\frac{\sig^2}{2} \cdot  \rand{R}_t 
&=&  \left( - \mathop{\max_{\mat{a} \in \Sym}}_{a_0 = \rand{A}_t} \frac{\D(\pmb{\rand{A}}_t,\vec{a})}{4} \right)
    +  \mathop{\max_{\mat{a} \in \Sym}}_{a_0 \neq \rand{A}_t} \frac{\D(\pmb{\rand{A}}_t,\mat{a})}{4}, \nn
&=&  -\rand{Y}_t  + \rand{X}_t =|\rand{X}_t - \rand{Y}_t|, \nonumber
\eea
where the second equality follows from (\ref{eqn:XaY}), and the third from the fact $\rand{R}_t \geq 0$, see Definition \ref{defn:relt}. We have thus shown (\ref{eqn:relprop}) for the case $\rand{B}_t \neq \rand{A}_t$. The same conclusion follows for the other case $\rand{B}_t = \rand{A}_t$ in similar manner.
\end{proof}
\end{pro}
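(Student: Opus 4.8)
The plan is to reduce the minimization defining $\rand{R}_t$ in (\ref{relt}) to the two auxiliary maxima $\rand{X}_t,\rand{Y}_t$ of (\ref{eqn:XaY}) by exploiting two elementary properties of the difference functional $\D$ in (\ref{eqn:D}): \emph{additivity}, $\D(\mat{a},\mat{b})+\D(\mat{b},\mat{c})=\D(\mat{a},\mat{c})$ for any $\pm1$ sequences (the intermediate squared distances telescope), and \emph{antisymmetry}, $\D(\mat{a},\mat{b})=-\D(\mat{b},\mat{a})$. First I would scale (\ref{relt}) by $\sig^2/2$ and insert $\pmb{\rand{A}}_t$ as an intermediate point, so that $\D(\mat{a},\pmb{\rand{B}}\bI{t})=\D(\mat{a},\pmb{\rand{A}}_t)+\D(\pmb{\rand{A}}_t,\pmb{\rand{B}}\bI{t})$; since the second term is independent of $\mat{a}$ it can be pulled out of the minimization, giving $\tfrac{\sig^2}{2}\rand{R}_t=\big(\min_{\mat{a}\in\Sym,\,a_0\neq\rand{B}_t}\tfrac14\D(\mat{a},\pmb{\rand{A}}_t)\big)+\tfrac14\D(\pmb{\rand{A}}_t,\pmb{\rand{B}}\bI{t})$.

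Next I would use antisymmetry to rewrite the minimum as a negative maximum, $\min_{\mat{a}}\tfrac14\D(\mat{a},\pmb{\rand{A}}_t)=-\max_{\mat{a}}\tfrac14\D(\pmb{\rand{A}}_t,\mat{a})$, and then interpret both remaining quantities through $\rand{X}_t$ and $\rand{Y}_t$. The crucial observation is that $\pmb{\rand{B}}\bI{t}$, being the minimizer of $|\pmb{\rand{Z}}_t-\Mat{T}\1-\Mat{H}\mat{a}|^2$ over all of $\Sym$ by (\ref{eqn:B1}) and noting that $|\pmb{\rand{Z}}_t-\Mat{T}\1-\Mat{H}\pmb{\rand{A}}_t|^2$ is constant in $\mat{a}$, is simultaneously the \emph{maximizer} of $\D(\pmb{\rand{A}}_t,\mat{a})$ over $\Sym$. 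Since $a_0\in\{-1,1\}$, $\Sym$ is the disjoint union of the blocks $\{a_0=\rand{A}_t\}$ and $\{a_0\neq\rand{A}_t\}$, on which the constrained maxima of $\tfrac14\D(\pmb{\rand{A}}_t,\mat{a})$ are $\rand{Y}_t$ and $\rand{X}_t$ respectively (see (\ref{eqn:XaY})); the global maximizer $\pmb{\rand{B}}\bI{t}$ falls into exactly one of these blocks according to whether its $0$-th component $\rand{B}_t$ equals $\rand{A}_t$, so $\tfrac14\D(\pmb{\rand{A}}_t,\pmb{\rand{B}}\bI{t})$ equals $\rand{Y}_t$ when $\rand{B}_t=\rand{A}_t$ and $\rand{X}_t$ when $\rand{B}_t\neq\rand{A}_t$.

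It then remains to split into the two cases $\rand{B}_t\neq\rand{A}_t$ and $\rand{B}_t=\rand{A}_t$. In the first case binariness turns the constraint $a_0\neq\rand{B}_t$ into $a_0=\rand{A}_t$, so the bracketed term is $-\rand{Y}_t$ and the leftover term is $\rand{X}_t$, yielding $\tfrac{\sig^2}{2}\rand{R}_t=\rand{X}_t-\rand{Y}_t$; in the second case $a_0\neq\rand{B}_t$ becomes $a_0\neq\rand{A}_t$, the bracketed term is $-\rand{X}_t$ and the leftover term is $\rand{Y}_t$, yielding $\tfrac{\sig^2}{2}\rand{R}_t=\rand{Y}_t-\rand{X}_t$. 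In both cases $\rand{R}_t\geq0$ by Definition~\ref{defn:relt}, so the signed difference is necessarily nonnegative and hence coincides with $|\rand{X}_t-\rand{Y}_t|$; rescaling then recovers (\ref{eqn:relprop}).

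I expect the main obstacle to be conceptual rather than computational: carefully translating the constraint $a_0\neq\rand{B}_t$ into $a_0=\rand{A}_t$ (valid precisely because $\rand{B}_t\neq\rand{A}_t$ and the alphabet is binary), and cleanly justifying that the unconstrained Euclidean-distance minimizer $\pmb{\rand{B}}\bI{t}$ is the constrained $\D(\pmb{\rand{A}}_t,\cdot)$-maximizer over whichever block it lies in. Invoking $\rand{R}_t\geq0$ to merge the two sign cases into a single absolute value is the last and easiest step.
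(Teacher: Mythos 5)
Your proposal is correct and follows essentially the same route as the paper's proof: telescope $\D(\mat{a},\pmb{\rand{B}}\bI{t})$ through $\pmb{\rand{A}}_t$, pull out the constant term, flip the constrained minimum into a negative maximum via antisymmetry, identify $\tfrac14\D(\pmb{\rand{A}}_t,\pmb{\rand{B}}\bI{t})$ with $\rand{X}_t$ or $\rand{Y}_t$ according to the sign case, and merge the two cases using $\rand{R}_t\geq 0$. You merely make explicit two points the paper leaves implicit (the additivity of $\D$ and the fact that $\pmb{\rand{B}}\bI{t}$ is the global maximizer of $\D(\pmb{\rand{A}}_t,\cdot)$ over $\Sym$, hence attains the block-constrained maximum of whichever block it lies in), and you write out the second case that the paper dismisses as "similar."
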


The expression (\ref{eqn:relprop}) is developed for purposes of analysis, and cannot be used to compute $R_t$. In practice, the quantities $X_t$ and $Y_t$ cannot never be computed, as they require knowledge of the transmitted sequence $\randb{A}_t$, see (\ref{eqn:XaY}). Such knowledge is never available at the detector, because the detector is in fact trying to estimate $\randb{A}_t$. The simple Proposition \ref{relprop}, which seems completely overlooked in past literature, enables the derivation of the main result.

\subsection{Expressions for joint reliability distribution and symbol error probability: Main result}

For any $\kappa$ number of arbitrarily chosen time instants $t_1, t_2, \cdots, t_\kappa$, we wish to obtain the distribution of the vector $\Rtk$, containing the following reliabilities
\bea
	\Rtk &\define &[\rand{R}_{t_1}, \rand{R}_{t_2},  \cdots,  \rand{R}_{t_\kappa}]^T. \label{eqn:Rtk}
\eea

\newcommand{\GMats}{\Mat{G}(\randb{A}_{t_1}),\Mat{G}(\randb{A}_{t_2}), \cdots, \Mat{G}(\randb{A}_{t_n})}
\newcommand{\bdiag}{\pmb{\diag}}

Recall $\0_{1,i}$ denotes a length $i$ vector with all entries equal to $0$. Define a \emph{binary vector} $\mat{e}_i$ of length $2(\L+\ISI)+1$ as 
\bea
\mat{e}_i &\define& [\0_{1,\L+\ISI+i},  1, \0_{1,\L+\ISI-i}]^T,  \label{eqn:mate}
\eea
where $i$ can take values $|i| \leq \L + \ISI$. Further define the matrix $\Mat{E}$ of size $2(\L+\ISI) + 1$ by $2\L$ as 
\bea
	\Mat{E} \define [\mat{e}_{-\L}, \mat{e}_{-\L+1},\cdots, \mat{e}_{-1},\mat{e}_{1}, \mat{e}_{2}, \cdots, \mat{e}_{\L}]. \label{eqn:Emat}
\eea
Let $\diag(\randb{A}_t)$ denote the \emph{diagonal matrix}, whose diagonal equals the vector $\randb{A}_t$. Define the following $2m + \ell + 1$ by $2^{2m}$ matrix 
\bea
	\G(\randb{A}_t) &\define& \Mat{H} \diag(\randb{A}_t) \Mat{E}, \label{eqn:G}
\eea
where the noise neighborhood $\randb{W}_t$ is given by (\ref{eqn:Wt}). Let $\Wtk$ denote the concatenation 
\bea
\Wtk \define \ba{c} \randb{W}_{t_1} \\ \randb{W}_{t_2} \\ \vdots \\ \randb{W}_{t_n} \ea. \label{eqn:Wtk}
\eea
Define the noise covariance matrix 
\bea
\Kw &\define&
\ba{ccc}
\E\{\pmb{\rand{W}}_{t_1} \pmb{\rand{W}}_{t_1}^T \} & \cdots & \E\{\pmb{\rand{W}}_{t_1}\pmb{ \rand{W}}_{t_n}^T \} \\
 \vdots & \ddots  & \vdots \\
 \E\{\pmb{\rand{W}}_{t_n} \pmb{\rand{W}}_{t_1}^T \} & \cdots & \E\{\pmb{\rand{W}}_{t_n} \pmb{\rand{W}}_{t_n}^T \}
\ea \nn &=& \E\{ \Wtk \Wtk^T\},
 \label{eqn:mx_cov} 
\eea
where note that $\Kw$ is generally not Toeplitz even if $\rand{W}_t$ is stationary. As in (\ref{eqn:Wtk}), let $\Atk$ denote the concatenation 
\bea
\Atk \define \ba{c} \randb{A}_{t_1} \\ \randb{A}_{t_2} \\ \vdots \\ \randb{A}_{t_n} \ea. \label{eqn:Atk}
\eea
Let $\I$ denote the identity matrix; in particular $\I_{2\L}$ has size $2\L \times 2\L$. Define the matrix $\S$ of size $2\L$ by $2^{2\L}$ as 
\bea
	\S \define [\s_0, \s_1, \cdots, \s_{2^{2\L}-1}], \label{eqn:S}
\eea
where the columns $\s_0, \s_1, \cdots, \s_{2^{2\L}-1}$ make up all $2^{2\L}$ possible, length-$(2\L)$ binary vectors, i.e., $\Ev{\s_0, \s_1, \cdots, \s_{2^{2\L}-1}}= \{0,1\}^{2m}$. The matrix $\S\S^T$ has the following simple expression
\bea
	\S\S^T &=& \sum_{k=0}^{2^{2\L}-1} \s_k \s_k^T =  2^{2(\L-1)}  \cdot [\I_{2\L} + \1\1^T],  \label{eqn:SS}
\eea
where here the vector $\1$ has all entries equal to $1$. Denote the matrix \emph{Kronecker product} using the operation $\otimes$. Let $\bdiag\left( \GMats \right)$ denote a \emph{block diagonal matrix}, whose block-diagonal entries equal $\GMats$.

\begin{defn} \label{defn:QD}
Let the square matrix $\Q = \Q(\Atk)$ of size $2mn$ by $2mn$ satisfy the following two conditions:
\bitm
\item[i)]  the matrix $\Q$ {decomposes} 
\bea
  \!\!\!\!  \!\!\!\! \Q\Del^2\Q^T   \!\!\!\!\!\!
&=&   \!\!\!\! \bdiag\left(\GMats\right)^T \Kw  \nn
&&  \!\cdot~\bdiag\left(\GMats\right),
\nn
 \label{eqn:QDQ}
\eea
where 
$\Del =\Del(\Atk)$ is a diagonal matrix. The number of positive diagonal elements in the matrix $\Del$, equals the rank of the matrix (\ref{eqn:QDQ}).  
\item[ii)] the matrix $\Q$ \textbf{diagonalizes} the matrix $\I_n \otimes \S\S^T$, i.e., the matrix $\Q$ satisfies
\bea
	\Q^T (\I_n \otimes \S\S^T) \Q = \I,
\eea
noting that the matrix $\S\S^T$ is square of size $2m$. 
\eitm
\end{defn}

\renewcommand{\fn}{\footnote{The matrix appearing in (\ref{eqn:F}), with elements $\mat{G}(\randb{A}_{t_i})$, can also be written as $\bdiag\left( \GMats \right)$.}}
Appendix \ref{app:spec} describes the computation of $\Q = \Q(\Atk)$, and $\Del = \Del(\Atk)$ in (\ref{eqn:QDQ}). We partition matrix $\Mat{Q}$ into $n$ partitions of equal size $2m \times 2mn$, i.e., 
\bea
\Q = \ba{c} \Q_1 \\ \Q_2 \\ \vdots \\ \Q_n \ea. \label{eqn:Qpart}
\eea
Let $\diag(A_{t_1},A_{t_2},\cdots, A_{t_n}) $ denote the diagonal matrix, whose diagonal equals $[A_{t_1},A_{t_2},\cdots, A_{t_n}]^T$. Define the $n \times 2\L n $ matrix $\Mat{\F}(\Atk)$ as\fn 
\bea
\Mat{F}(\Atk) 
&\define& \diag(A_{t_1},A_{t_2},\cdots, A_{t_n}) \otimes \mat{h}_0^T \Kw \nn
&& \cdot
\renewcommand{\arraystretch}{.7}
\ba{@{}c@{}c@{}c@{}c@{}} \Mat{G}(\randb{A}_{t_1})  \\ & \Mat{G}(\randb{A}_{t_2})  \\ & & \ddots \\ & & & \Mat{G}(\randb{A}_{t_n}) \ea 
\ba{c} \S\S^T\Q_1 \\ \S\S^T\Q_2 \\ \vdots \\ \S\S^T\Q_n \ea 
\Del\Iv, \label{eqn:F} \nn
\eea
where $\mat{h}_0$ is given in (\ref{eqn:h_i}), 
and $\Del\Iv$ is formed by reciprocating only the \emph{non-zero} diagonal elements of $\Del$. Define the following length-$2^{2\L}$ vectors $\muY(\randb{A}_t)$ and $\nuX(\randb{A}_t)$ as
\begin{align}
\muY(\randb{A}_t)  =& [\mu_1, \mu_2, \cdots, \mu_{2^{2m}-1}]^T \nn
\define &    [\Mat{G}(\randb{A}_t)\S]^T \cdot \Mat{T} (\1 - \randb{A}_t) \nn
&   - \left[|\Mat{G}(\randb{A}_t)\s_0|^2, |\Mat{G}(\randb{A}_t)\s_1|^2,\cdots, |\Mat{G}(\randb{A}_t)\s_{2^{2\L}-1}|^2\right]^T, \nn \label{eqn:muY} \\
\nuX(\randb{A}_t)  =& [\nu_1, \nu_2, \cdots, \nu_{2^{2m}-1}]^T  \nn 
\define& \muY(\randb{A}_t) - 2\rand{A}_t \cdot \mat{h}_0 ^T \Mat{G}(\randb{A}_t)\S \label{eqn:nuX},
\end{align}
where $\mu_k = \mu_k(\randb{A}_t)$ and $\nu_k = \nu_k(\randb{A}_t)$ denote the $k$-th components of $ \pmb{\mu}_k(\randb{A}_t)$ and $\pmb{\nu}_k(\randb{A}_t) $ respectively, and $\Mat{T}$ is given in (\ref{eqn:HandT}). Let $\Phi_{\Mat{K}}(\mat{r})$ denote the distribution function of a zero-mean Gaussian random vector with covariance matrix $\Mat{K}$. Finally define the following length-$n$ random vectors 
\bea
	\Xtk &\define &[\rand{X}_{t_1}, \rand{X}_{t_2},  \cdots,  \rand{X}_{t_\kappa}]^T, \nn
	\Ytk &\define &[\rand{Y}_{t_1}, \rand{Y}_{t_2},  \cdots,  \rand{Y}_{t_\kappa}]^T,
\eea
where both $\rand{X}_{t_i}$ and $\rand{Y}_{t_i}$ are given in (\ref{eqn:XaY}). Let $\Real$ denote the set of real numbers. 

\newcommand{\Om}{\U}

\begin{thm} \label{thm:1}
Define the following quantities:
\bitm

\item let $\Om $ denote a standard zero-mean identity-covariance Gaussian random vector of length-$(2\L\kappa)$. 

\item let $\dmat(\Om,\Atk)=[\d_1,\d_2,\cdots,\d_\kappa]^T$ denote a length-$n$ vector in $\Real^n$, that satisfies
\bea
 \!\!\!\!\!\!\!\! \d_i = \d_i(\Om, \Atk) 
 \!\!\!\!  &\define&  \!\!\!\!  \max( \S^T\Q_i\Del\Om + \muY(\randb{A}_{t_i})) \nn && - \max ( \S^T\Q_i\Del\Om + \nuX(\randb{A}_{t_i})).  \label{eqn:m}
\eea

\item let $\pmb{\eta}(\Om,\Atk) = [\eta_1,\eta_2,\cdots,\eta_\kappa]^T$  denote a length-$n$ vector in $\Real^n$, that satisfies
\bea
\pmb{\eta}(\Om,\Atk)  
&\define& \diag(\rand{A}_{t_1},\rand{A}_{t_2},\cdots, \rand{A}_{t_n}) 
\Mat{T} \nn 
&& \cdot \left(\1\cdot \1^T - [\randb{A}_{t_1},\randb{A}_{t_2},\cdots, \randb{A}_{t_n}]\right)^T \mat{h}_0   \nn
&& -|\mat{h}_0 |^2 \cdot \1  + \Mat{F}(\Atk) \Om.  \label{eqn:eta}
\eea

\item let $\Kv(\Atk)$ denote a $n \times n$ matrix as follows
\bea
\Kv(\Atk) &\define& \diag(\rand{A}_{t_1},\rand{A}_{t_2},\cdots, \rand{A}_{t_n}) \otimes \mat{h}_0^T \Kw \nn
&&\cdot \diag(\rand{A}_{t_1},\rand{A}_{t_2},\cdots, \rand{A}_{t_n}) \otimes \mat{h}_0 \nn
&& - \Mat{F}(\Atk) \Mat{F}(\Atk)^T. \label{eqn:nu}
\eea
\eitm
Then the distribution of $\Xtk-\Ytk$ is given as
\bea
F_{\Xtk-\Ytk}(\mat{r})   \!\!\!\!  &=&   \!\!\!\! \E\left\{ 
  \Phi_{\Kv(\Atk)}\left(\mat{r} + \dmat(\Om, \Atk) - \pmb{\eta}(\Om, \Atk)\right) 
\right\}
 \nn \label{eqn:main_preview}
\eea
for all $\mat{r} \in \Real^\kappa$.
\end{thm}

The proof of Theorem \ref{thm:1} is given in Section \ref{sect:dens}. Both i) the joint distribution of the reliabilities $\Rtk \define [\rand{R}_{t_1},\rand{R}_{t_1},\cdots, \rand{R}_{t_n}]^T$ in (\ref{eqn:Rtk}), and ii) the joint error probability $\PSE$, follow as corollaries from Theorem \ref{thm:1}. In the following we denote an index subset $\{\tau_1,\tau_2,\cdots, \tau_j \} \subseteq \{t_1,t_2,\cdots,t_n\}$ of size $j$, written compactly in vector form as $\pmb{\tau}_1^j=[\tau_1,\tau_2,\cdots, \tau_j]^T$. 

\begin{cor} \label{cor:main}
The distribution of $\Rtk = 2/\sig^2 \cdot |\Xtk - \Ytk|$, see Proposition \ref{relprop}, is given by
\begin{align}
F_{\Rtk}(\mat{r})  &= F_{|\Xtk - \Ytk|}(\sig^2 /2 \cdot \mat{r})  \nn
&= \sum_{j=0}^n\mathop{\sum_{\{\tau_1,\tau_2,\cdots,\tau_j \} \subseteq }}_{\{t_1,t_2,\cdots,t_n \}}\!\!\!\!\!\!\!\! (-1)^j \cdot
  F_{\Xtk - \Ytk}\left(\frac{\sig^2}{2} \cdot \Alp(\pmb{\tau}_1^j, \mat{r}) \right) \nonumber
\end{align}
where the length-$n$ vector $\Alp(\pmb{\tau}_1^j, \mat{r})=[\alp_1,\alp_2,\cdots,\alp_n]^T$ satisfies
\[
\alp_i =\alp_i(\pmb{\tau}_1^j,r_i)  = \left\{\begin{array}{rl}
		-r_i & \mbox{ if } t_i \in \{\tau_1,\tau_2,\cdots, \tau_j\},  \\
		r_i & \mbox{ otherwise },
			\end{array} \right.
\]
and $F_{\Xtk - \Ytk}\left(\frac{\sig^2}{2}\cdot \Alp(\pmb{\tau}_1^j, \mat{r})\right)$ has the similar closed form as in Theorem \ref{thm:1}. \hspace*{\fill}\IEEEQEDopen
\end{cor}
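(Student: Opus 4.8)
The plan is to derive Corollary \ref{cor:main} purely from Proposition \ref{relprop} together with Theorem \ref{thm:1}, via an inclusion-exclusion argument applied componentwise to the absolute values. First I would observe that by Proposition \ref{relprop}, the reliability vector is $\Rtk = (2/\sig^2)\,|\Xtk - \Ytk|$ with the absolute value taken componentwise, so
\[
F_{\Rtk}(\mat{r}) = \Pr{\bigcap_{i=1}^n \Ev{\tfrac{2}{\sig^2}|\rand{X}_{t_i} - \rand{Y}_{t_i}| \leq r_i}} = \Pr{\bigcap_{i=1}^n \Ev{|\rand{X}_{t_i} - \rand{Y}_{t_i}| \leq \tfrac{\sig^2}{2} r_i}}.
\]
This is exactly $F_{|\Xtk - \Ytk|}(\sig^2/2\cdot\mat{r})$, giving the first line of the claimed identity. (If some $r_i < 0$ the event is empty and both sides are zero, which the formula below will also respect.)

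Next I would translate each event $\Ev{|\rand{D}_i| \leq c_i}$, where I write $\rand{D}_i \define \rand{X}_{t_i} - \rand{Y}_{t_i}$ and $c_i \define (\sig^2/2) r_i \geq 0$, as the intersection $\Ev{\rand{D}_i \leq c_i} \cap \Ev{\rand{D}_i \geq -c_i} = \Ev{\rand{D}_i \leq c_i} \setminus \Ev{\rand{D}_i < -c_i}$. Since the distribution of the continuous random vector $\Xtk - \Ytk$ has no atoms on the relevant hyperplanes (or, more carefully, one works with the closed event $\Ev{\rand{D}_i \geq -c_i}$ and its complement up to a null set), I may replace the strict inequality by $\Ev{\rand{D}_i \leq -c_i}$, so that
\[
\Pr{|\rand{D}_i| \leq c_i} \text{ as an event } = \Ev{\rand{D}_i \leq c_i} \setminus \Ev{\rand{D}_i \leq -c_i}.
\]
Intersecting over $i = 1,\dots,n$ and expanding the resulting product of set-differences $\bigcap_{i}(A_i \setminus B_i)$ with $B_i \subseteq A_i$ (note $-c_i \leq c_i$) yields the inclusion-exclusion sum
\[
\Pr{\bigcap_{i=1}^n (A_i \setminus B_i)} = \sum_{S \subseteq \{1,\dots,n\}} (-1)^{|S|} \Pr{\bigcap_{i \in S} B_i \cap \bigcap_{i \notin S} A_i},
\]
where each summand, upon substituting $A_i = \Ev{\rand{D}_i \leq c_i}$ and $B_i = \Ev{\rand{D}_i \leq -c_i}$, is precisely $F_{\Xtk - \Ytk}$ evaluated at the vector whose $i$-th entry is $-c_i = -(\sig^2/2)r_i$ for $i$ in the subset and $+c_i = (\sig^2/2)r_i$ otherwise. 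Reindexing $S$ by the corresponding index subset $\{\tau_1,\dots,\tau_j\} \subseteq \{t_1,\dots,t_n\}$ and recalling the definition of $\Alp(\pmb{\tau}_1^j,\mat{r})$, this is exactly the stated formula, and the closed form of each term then follows by quoting Theorem \ref{thm:1}.

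The substantive point — rather than a true obstacle — is justifying the passage from $\Ev{\rand{D}_i \geq -c_i}^c$ to $\Ev{\rand{D}_i \leq -c_i}$ and, more generally, checking that the events behave well enough for the set-algebra identity to hold without boundary corrections; this rests on absolute continuity of the law of $\Xtk - \Ytk$, which one can see from Theorem \ref{thm:1}, where $\Xtk - \Ytk$ is expressed through the Gaussian-convolved quantity $\Phi_{\Kv(\Atk)}$. The remaining steps — the componentwise rewriting of the absolute value, the binomial-style expansion of the intersection of set differences, and the bookkeeping matching subsets $S$ to the vectors $\Alp(\pmb{\tau}_1^j,\mat{r})$ with the $(-1)^j$ sign — are routine. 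One should also note the edge case $j = 0$ (the subset empty), which contributes the single term $+F_{\Xtk - \Ytk}((\sig^2/2)\mat{r})$, consistent with the sum starting at $j = 0$ in the statement.
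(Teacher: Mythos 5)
Your proof is correct, but it takes a different route from the paper's. The paper verifies Corollary \ref{cor:main} by recursion on $n$: it peels off the last coordinate via
$F_{|\Xtk - \Ytk|}(\mat{r}) = F_{|\cdot|,\,X_{t_n}-Y_{t_n}}(\mat{r}_1^{n-1},r_n) - F_{|\cdot|,\,X_{t_n}-Y_{t_n}}(\mat{r}_1^{n-1},-r_n)$, applies the corollary inductively to the remaining $n-1$ absolute values, and collapses the result into the subset sum ``after some algebraic manipulations.'' You instead write each event $\Ev{|\rand{D}_i|\leq c_i}$ as a set difference $A_i\setminus B_i$ with $B_i\subseteq A_i$, use $\1_{A_i\setminus B_i}=\1_{A_i}-\1_{B_i}$, and expand the product $\prod_i(\1_{A_i}-\1_{B_i})$ in one shot; this makes the correspondence between index subsets and the $(-1)^j$ signs transparent without induction, and is arguably the cleaner derivation. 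The two arguments are of course equivalent: unrolling the paper's recursion reproduces exactly your expansion.

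Two small caveats. First, your justification for replacing $\Ev{\rand{D}_i<-c_i}$ by $\Ev{\rand{D}_i\leq -c_i}$ appeals to absolute continuity of the law of $\Xtk-\Ytk$, ``seen from'' the Gaussian form in Theorem \ref{thm:1}; but the covariance $\Kv(\Atk)$ there can be rank-deficient (even zero, cf.\ Proposition \ref{pro:rankKv} and Remark \ref{rem:smooth}), so atomlessness at the points $-c_i$ is not automatic and would need a separate argument. The paper's own recursion silently makes the same assumption, so this is a shared caveat rather than a defect of your approach. Second, your parenthetical claim that the formula also ``respects'' the case $r_i<0$ is not quite right (for $c_i<0$ one has $B_i\not\subseteq A_i$ and $F(c_i)-F(-c_i)$ need not equal $0$), but this edge case is immaterial since $\rand{R}_{t_i}\geq 0$.
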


\newcommand{\ttt}{{\mat{t}_1^{n-1}}}

Corollary \ref{cor:main} can be verified using recursion; for the $n$-th case we express 
\bea
F_{|\Xtk - \Ytk|}(\mat{r}) &=&  F_{|\randb{X}_\ttt - \randb{Y}_\ttt|, X_{t_n} - Y_{t_n}}(\mat{r}_1^{n-1}, r_n) \nn
&& -  F_{|\randb{X}_\ttt - \randb{Y}_\ttt|, X_{t_n} - Y_{t_n}}(\mat{r}_1^{n-1}, -r_n).
\nonumber
\eea
Observe that we still may apply Corollary \ref{cor:main} to each of the two terms on the r.h.s.; we apply Corollary \ref{cor:main} only to the variables $|\randb{X}_\ttt - \randb{Y}_\ttt|$, at the same time accounting for the (respective) joint events $\{X_{t_n} - Y_{t_n} \leq r_n \}$ and $\{X_{t_n} - Y_{t_n} \leq -r_n \}$. The desired expression will be obtained after using some algebraic manipulations.

\newcommand{\om}{\mat{u}}

\begin{algorithm}[!t]
	\SetLine
	\linesnumbered
	\nocaptionofalgo
	\SetKwInput{Init}{Initialize} 
	\Init{Set $F_{\Xtk-\Ytk}(\mat{r}) := 0$ for all $\mat{r} \in \Real^\kappa$;}
	\While{$F_{\Xtk-\Ytk}(\mat{r})$ not converged}{
	  Sample $\Atk=\atk$ using $\Pr{\Atk = \atk}$. Sample the length-$n$, standard zero-mean identity-covariance Gaussian vector $\Om = \om$\;
		Using the sampled realization $\Atk=\atk$, obtain the matrices $\Q = \Q(\atk)$ and $\Del = \Del(\atk)$ satisfying Definition \ref{defn:QD}, see Appendix \ref{app:spec}\;
		Compute $\d_i = \d_i(\om, \atk)$ for all $i \in \{1,2,\cdots, n\}$. For $\d_i$ compute
		\bea
			\max_{k\in\{0,1,\cdots,2^{2\L}-1\}} \s_k^T\Q_i \Del\om + \mu_k(\mat{a}),  \nn
			 \nn	
			\max_{k\in\{0,1,\cdots,2^{2\L}-1\}}  \s_k^T\Q_i \Del\om  + \nu_k(\ati),  \nonumber
			\eea
			see (\ref{eqn:m}). Here $\ati$ is the sampled realization $\randb{A}_{t_i} = \ati$, and both $\mu_k(\ati)$ and $\nu_k(\ati)$ are the $k$-th components of $\muY(\ati)$ and $\nuX(\ati)$, see (\ref{eqn:muY}) and (\ref{eqn:nuX})\;
			Compute $\Mat{F}(\Atk)$ in (\ref{eqn:F}); Also compute $\pmb{\eta}(\om, \atk)$ in (\ref{eqn:eta}) and $\Kv(\atk)$ in (\ref{eqn:nu})\;
		Update 
		\begin{align}
			F&_{\Xtk - \Ytk}(\mat{r}) \nn :=& F_{\Xtk - \Ytk}(\mat{r})  +   \Phi_{\Kv(\atk)} \left( \mat{r} +  \dmat(\om,\atk) -\! \pmb{\eta}(\om, \atk) \right) \nonumber
			\end{align}
			for all $\mat{r} \in \Real^\kappa$\;
	}	
	\caption{\textbf{Procedure 1}: Evaluate Joint Distribution $F_{\pmb{\rand{X}}_{\textbf{t}_1^n}-\pmb{\rand{Y}}_{\textbf{t}_1^n}}(\textbf{r})$}
	\label{proce:pdfXmY}
\end{algorithm}

\begin{cor} \label{cor:main2}
The probability $\PSE$ that \textbf{all} symbol decisions $\rand{B}_{t_1},\rand{B}_{t_2},\cdots, \rand{B}_{t_n}$ are in error, equals
\begin{align}
&\PSE  = \Pr{\Xtk \geq  \Ytk}  \nn
&= 1 + \sum_{j=1}^n \mathop{\sum_{\{\tau_1,\tau_2,\cdots,\tau_j \} \subseteq }}_{\{t_1,t_2,\cdots,t_n \} } (-1)^j \cdot 
   F_{\randb{X}_{\pmb{\tau}_1^j}-\randb{Y}_{\pmb{\tau}_1^j}}(\mat{0}), \nonumber
\end{align}
where the probability 
\bea
F_{\randb{X}_{\pmb{\tau}_1^j}-\randb{Y}_{\pmb{\tau}_1^j}}(\mat{0}) = 
\Pr{\bigcap_{\tau \in \{\tau_1,\tau_2,\cdots, \tau_j\}} \{ X_{\tau} - Y_\tau \leq 0 \}} \nonumber
\eea
has a similar closed form as in Theorem \ref{thm:1}.  \hspace*{\fill}\IEEEQEDopen
\begin{proof} \rm
From (\ref{eqn:XaY}) we clearly see that the event $\Ev{X_t \geq Y_t}$ indicates that the sequence $\pmb{B}\bI{t}$ in (\ref{eqn:B1}) will have its $0$-th component $B\bI{t}_0 \neq A_t$. Because the symbol decision $B_t$ is set to $B_t = {B}\bI{t}_0$ (where $\randb{B}\bI{t}$ is defined in (\ref{eqn:B1})) the event $\Ev{X_t \geq Y_t}$ indicates that $B_t \neq A_t$, which is exactly a symbol decision error occurring at time $t$.
\end{proof}
\end{cor}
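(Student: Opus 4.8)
The first equality in the statement, $\PSE = \Pr{\Xtk \geq \Ytk} = \Pr{\bigcap_{i=1}^n \Ev{X_{t_i} \geq Y_{t_i}}}$, is exactly the event-equivalence already recorded in the given proof: by (\ref{eqn:XaY}) the event $\Ev{X_{t_i} \geq Y_{t_i}}$ coincides with $\Ev{B_{t_i} \neq A_{t_i}}$, so their intersection over $i$ is precisely the event that all $n$ decisions err. Thus the only remaining task is to prove the inclusion--exclusion identity for $\Pr{\bigcap_{i=1}^n \Ev{X_{t_i} - Y_{t_i} \geq 0}}$. The plan is a product-of-complements expansion, after which Theorem \ref{thm:1} supplies each resulting term in closed form.

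First I would write each indicator in complementary form and multiply over the $n$ indices, expanding the product into $2^n$ monomials indexed by subsets $S \subseteq \Ev{t_1,\cdots,t_n}$:
\[
\prod_{i=1}^n \mathbb{1}\Ev{X_{t_i} - Y_{t_i} \geq 0}
= \prod_{i=1}^n \bigl(1 - \mathbb{1}\Ev{X_{t_i} - Y_{t_i} < 0}\bigr)
= \sum_{S \subseteq \{t_1,\cdots,t_n\}} (-1)^{|S|} \prod_{\tau \in S} \mathbb{1}\Ev{X_\tau - Y_\tau < 0}.
\]
Since $\prod_{\tau \in S}\mathbb{1}\Ev{X_\tau - Y_\tau < 0} = \mathbb{1}\Ev{\bigcap_{\tau \in S}\Ev{X_\tau - Y_\tau < 0}}$, taking expectations term by term (a finite sum) yields $\Pr{\bigcap_i \Ev{X_{t_i}-Y_{t_i}\geq 0}} = \sum_S (-1)^{|S|}\Pr{\bigcap_{\tau \in S}\Ev{X_\tau - Y_\tau < 0}}$. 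The empty subset contributes the constant $1$, and grouping the remaining subsets by cardinality $j = |S|$, written as $\pmb{\tau}_1^j$, reproduces the double sum $1 + \sum_{j=1}^n \sum_{\pmb{\tau}_1^j} (-1)^j(\cdots)$ appearing in the statement.

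It then remains to identify each nonempty term with $F_{\randb{X}_{\pmb{\tau}_1^j} - \randb{Y}_{\pmb{\tau}_1^j}}(\mat{0}) = \Pr{\bigcap_{\tau \in S}\Ev{X_\tau - Y_\tau \leq 0}}$, which is the one place where the strict inequality $<$ produced by the algebra must be exchanged for the non-strict $\leq$ carried by the c.d.f. For this I would show each marginal $X_{t_i} - Y_{t_i}$ has no atom at $0$: specializing Theorem \ref{thm:1} to $n=1$ expresses the scalar c.d.f. $F_{X_{t_i}-Y_{t_i}}$ as an expectation of the one-dimensional Gaussian c.d.f. $\Phi_{\Kv(\randb{A}_{t_i})}$, which is continuous whenever the conditional variance $\Kv(\randb{A}_{t_i})$ in (\ref{eqn:nu}) is positive; hence $\Pr{X_{t_i}-Y_{t_i}=0}=0$. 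A union bound over $i$ gives $\Pr{\exists\, i : X_{t_i}-Y_{t_i}=0}=0$, so for every $S$ the events $\bigcap_{\tau\in S}\Ev{X_\tau-Y_\tau<0}$ and $\bigcap_{\tau\in S}\Ev{X_\tau-Y_\tau\leq 0}$ differ by a null set and share the same probability. Substituting $F_{\randb{X}_{\pmb{\tau}_1^j}-\randb{Y}_{\pmb{\tau}_1^j}}(\mat{0})$ for each term, and recalling that this c.d.f. at $\mat{0}$ is furnished in closed form by Theorem \ref{thm:1} (its marginal over the coordinates indexed by $S$), completes the proof.

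The main obstacle is precisely this atomlessness step: the clean combinatorial expansion inevitably produces strict inequalities, whereas the only closed form available is a c.d.f. evaluated with non-strict inequalities, so one must rule out probability mass on the boundary $\Ev{X_{t_i}-Y_{t_i}=0}$. Establishing positivity of $\Kv(\randb{A}_{t_i})$ in (\ref{eqn:nu}) for the relevant realizations of $\randb{A}_{t_i}$ — or, equivalently, arguing directly that $X_{t_i}-Y_{t_i}$ is a non-degenerate (piecewise-affine, non-constant) function of the Gaussian noise $\randb{W}_{t_i}$ — is the only nontrivial ingredient; the rest is routine bookkeeping.
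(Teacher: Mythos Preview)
Your treatment of the first equality matches the paper exactly: the paper's own proof consists solely of the observation that $\{X_t\geq Y_t\}=\{B_t\neq A_t\}$, which you reproduce. For the inclusion--exclusion identity the paper offers no argument at all---it simply states the formula as part of the corollary and leaves it as standard. Your product-of-complements expansion is the natural justification and is correct.

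You go further than the paper in one respect: you flag and attempt to resolve the strict-versus-nonstrict mismatch between $\Pr{\cap_\tau\{X_\tau-Y_\tau<0\}}$ (produced by the algebra) and $F_{\randb{X}_{\pmb{\tau}_1^j}-\randb{Y}_{\pmb{\tau}_1^j}}(\mat{0})$ (defined with $\leq$). The paper is silent on this point; its only related comment is Remark~\ref{rem:smooth}, which notes left-continuity of $F_{\Xtk-\Ytk}$ \emph{conditional} on $\rank(\Kv(\Atk))=n$, without asserting that this rank condition always holds. Your proposed route through positivity of the scalar $\Kv(\randb{A}_{t_i})$ is therefore not something the paper supplies, and indeed Proposition~\ref{pro:rankKv} only bounds the rank from above. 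Your alternative---observing that, conditional on $\randb{A}_t$, the quantity $X_t-Y_t$ is a non-constant piecewise-affine function of the Gaussian vector $\randb{W}_t$ (Proposition~\ref{pro:XaYeqdist})---is the cleaner way to get atomlessness and does not rely on $\Kv$ at all. Either way, you have correctly identified a detail the paper leaves implicit; the rest of your argument is sound.
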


Denote the realizations of $\Atk$, $\randb{A}_t$ and $\U$, as $\Atk = \atk$, and $\randb{A}_t = \mat{a}$, and $\U = \mat{u}$. The Monte-Carlo procedure used to evaluate the closed-form of $F_{\Xtk - \Ytk}(\mat{r})$ in Theorem \ref{thm:1}, is given in Procedure \ref{proce:pdfXmY}. We may reduce the number of computations used to obtain matrices $\Q=\Q(\Atk)$ and $\Del=\Del(\Atk)$ in Line 3, by sampling $\U = \mat{u}$ multiple times for a fixed $\Atk = \atk$.

\begin{rem}
The matrix $\Kv(\atk)$ computed in Line 5 (see also (\ref{eqn:nu})) may not have full rank. Hence, when evaluating the Gaussian distribution function  $\Phi_{\Kv(\atk)} (\mat{r})$ with covariance matrix $\Kv(\atk)$ in Line 6, we may require techniques designed for rank deficient covariances, see for example~\cite{Somer}.
\end{rem}

Our proposed method requires no assumptions on the noise covariance matrix $\Kw$ in (\ref{eqn:mx_cov}), and can be applied even when the noise $W_t$ is correlated and/or non-stationary. However, there is an implicit assumption that $\Atk$ is equally-likely amongst all realizations $\Atk=\atk$ that have non-zero probability. Further modifications will be required to extend our method to the general case of non-uniform priors $\Pr{\Atk=\atk}$ (the first equality of (\ref{eqn:llr}) is not valid for such cases).

\begin{rem} \label{rem:conv}
Because $\Phi_{\mat{K}}(\mat{r})$ is a probability distribution function, therefore
\[
0 \leq \Phi_{\Kv(\Atk)} \left( \mat{r} + \dmat(\U,\Atk) - \pmb{\eta}( \U,\Atk) \right) \leq 1,
\]
the well-known Hoeffding probability inequalities can be applied to obtain convergence guarantees, see~\cite{Hoef}. 
\end{rem}

The main thrust of the next subsection is to address Line 4 of Procedure \ref{proce:pdfXmY}. It appears that to execute Line 4 of Procedure \ref{proce:pdfXmY}, we require an exhaustive search over $2^{2\L}$ terms to perform the two maximizations. However, we point out in the next subsection, that these maximizations can be performed more efficiently by utilizing dynamic programming optimization techniques. Also, in the next subsection, we address the computation of $F_{\pmb{X}_{\mathbf{t}_1^n}- \pmb{Y}_{\mathbf{t}_1^n}}(\mathbf{r})$, in instances where one wishes to only consider a subset $\bar{\Sym} \subset \Sym$ (see (\ref{eqn:Sym})).

\subsection{On efficient computation of the closed-form expressions} \label{ssect:33}

\begin{figure}[!t]
	\centering
		\includegraphics[width=\linewidth]{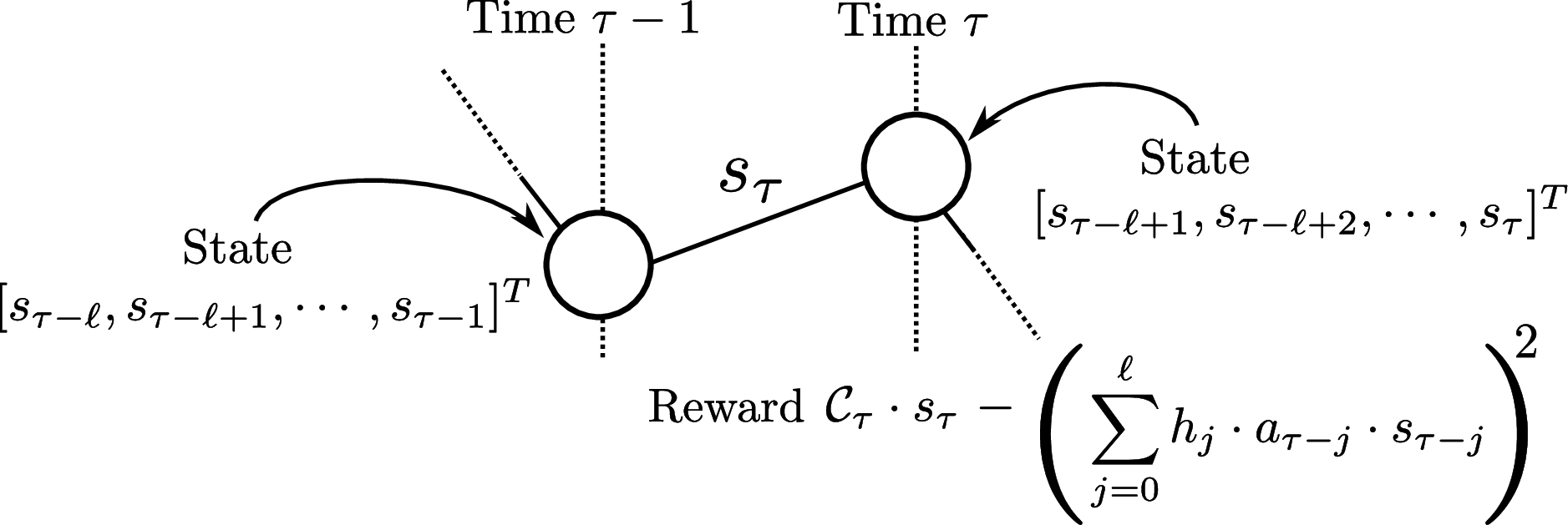}
	
	\caption{Time evolution of the dynamic programming states.}
	\label{fig:CS2}
\end{figure}

\renewcommand{\k}{\tau}

To compute $\d_i$ in (\ref{eqn:m}) while executing Line 4 of Procedure \ref{proce:pdfXmY}, we need to perform the following two maximizations 
\begin{align}
		\max_{\s \in \{0,1\}^{2m}} &\; \s^T \Q_i \Del\om +  [\Mat{G}(\ati)\s]^T \cdot \Mat{T} (\1 - \ati) - |\Mat{G}(\ati)\s|^2, \nn	
		\max_{\s \in \{0,1\}^{2m}} &\; \s^T \Q_i \Del\om +  [\Mat{G}(\ati)\s]^T \cdot [\Mat{T} (\1 - \ati)-2 a_{0} \cdot \mat{h}_0] \nn 
		&- |\Mat{G}(\ati)\s|^2 , \label{eqn:Max}
\end{align}
where both $\ati$ and $\mat{u}$ are realizations $\randb{A}_{t_i}=\ati$ and $\U = \mat{u}$. Note that we obtain (\ref{eqn:Max}) from (\ref{eqn:m}), by substituting for both $\muY(\ati)$ and $\nuX(\ati)$ using (\ref{eqn:muY}) and (\ref{eqn:nuX}) respectively. Index the realization $\randb{A}_{t_i}=\ati$ as 
\[
	\mat{a} \define [a_{-m-\ell}, a_{-m-\ell+1}, \cdots, a_{m+\ell}]^T.
\]
Let $\diag(\ati)$ denote the diagonal matrix, with diagonal $\mat{a}$. Let $\gm_\k$ denote the length $2(\L+\ISI) + 1$ vector
\begin{align}
	&\mat{g}_\k \define  [\0_{1,\L+\k}^T,h_\ISI  a_{\k-\ISI} ,h_{\ISI-1}  a_{\k-(\ISI-1)}, \cdots, h_0  a_\k, \0_{1,\L+\ISI-\k}^T]^T, \nonumber
\end{align} 
where $\k$ can take values $\k \in \{-m,-,m+1,\cdots, m+\ISI\}$. We rewrite $\G(\ati)$ as 
\bea
\G(\ati)  &\define&  \Mat{H}\diag(\ati) \Mat{E} =  \ba{c} \gm_{-m}^T \\ \gm_{-m+1}^T \\ \vdots \\ \gm_{m+\ISI}^T   \ea \Mat{E}, \label{eqn:GG}
\eea
recall the definition of $\G(\ati)$ from (\ref{eqn:G}). From the observed structure of $\mat{g}_\k$ 
it can be clearly seen from (\ref{eqn:GG}) that $\G(\ati)$ is a \emph{sparse matrix} with many zero entries. The matrix $\G(\ati)$ is an $(\ISI+1)$-banded matrix, see~\cite{Golub}, p. 16. As it is well-known in the literature on ISI channels, it is efficient to employ \emph{dynamic programming} techniques to solve both problems (\ref{eqn:Max}), by exploiting this $(\ISI+1)$-banded sparsity~\cite{Viterbi}.

\newcommand{\C}{\mathcal{C}}
\newcommand{\bet}{\beta}
\renewcommand{\tt}{\tau}
\renewcommand{\ss}{\bar{s}}
\newcommand{\smat}{\bar{\s}}

\begin{algorithm}[!t]
	\SetLine
	\linesnumbered
	\nocaptionofalgo
	\SetKwInput{Init}{Initialize}  \SetKwInput{Input}{Input} \SetKwInput{Output}{Output} \SetKwInput{Conv}{\emph{Convention}}  \SetKwInput{ConvT}{\mbox{ \hspace{40pt}  }} 
	\Conv{Set $\C_0 := -\infty$ and also set values $\C_j:=0 $ for all $|j| > \L$;}
	
	\ConvT{Denote the length-$\ISI$ binary vector by $\smat\define [\ss_{\ISI-1}, \ss_{\ISI-2},\cdots, \ss_0]^T$;}
	\Input{Matrix $\G(\ati)$; Vector of constants $\pmb{\C} = [\C_{-\L}, \C_{-\L+1}, \cdots, \C_{-1}, \C_1, \C_2, \cdots, \C_\L]^T$;}
	\Output{Value stored in $\bet_{\L+\ISI} (\smat) = \bet_{\L+\ISI}(\mat{0})$;
	}
	
	\Init{For all $\smat \in \{0,1\}^{\ISI} $, set the values \[ 
		\renewcommand{\arraystretch}{.7}
		\bet_{-\L-1} (\smat)  := \left\{  
		\begin{array}{cl}
			0 & \mbox{ if } \smat = \mat{0} ,\\
			-\infty & \mbox{ otherwise } .
		\end{array}
		\right.
	\]}
	\ForAll{$\tt \in \{-\L, -\L + 1, \cdots, \L+ \ISI$ \}}{
 		\ForAll{$\smat \in \{0,1\}^{\ISI} $}{
			Set the value $\alp = \alp(\smat) := \sum_{j=0}^{\ISI-1} h_j a_{\tt-j} \ss_{j}$. Set the states $\smat_0$ and $\smat_1$ as 
			\mbox{}~~~~~~~~
$\begin{array}{cc}
				\smat_0 &:= [0, \ss_{\ISI-1}, \cdots, \ss_2, \ss_1]^T, \\
				\smat_1 &:= [1, \ss_{\ISI-1}, \cdots, \ss_2, \ss_1]^T.
			\end{array}$\;
			Compute $\bet_\tt (\smat) : = \max \{- \alp^2 + \bet_{\tt-1} (\smat_0), \C_{\tt-\ISI} - [h_\ISI a_{\tt-\ISI} + \alp]^2 + \bet_{\tt-1} (\smat_1)\}$\;
		}
	}
	
	\caption{\textbf{Procedure 2}: Solve $\displaystyle \max_{\mathbf{s} \in \{0,1\}^{2m}} \mathbf{s}^T\pmb{\C} - |\mathbf{G}(\mathbf{a})\mathbf{s}|^2$ using Dynamic Programming}
	\label{proce:DP}
\end{algorithm}

It is clear 
that the inner product $\gm_\k^T \mat{e}_j$ extracts the $j$-th component of the vector $\gm_\k^T $, i.e., 
\bea
	\gm_\k^T \mat{e}_{\k-j} &=&  \left\{
	\begin{array}{cc}
		h_j \cdot a_{\k-j}  & \mbox{ if } 0 \leq j \leq \ell,  \\
		0 &  \mbox{ otherwise },
	\end{array}\right. \label{eqn:GGG}
\eea
where $j$ satisfies $|j| \leq m + \ISI$. Both problems (\ref{eqn:Max}) are optimized over all $\s \in \{0,1\}^{2m}$; we index
\[
 \s\define [s_{-\L}, s_{-\L+1}, \cdots, s_{-1}, s_1, s_2, \cdots , s_\L]^T. 
\]
It is clear that by using (\ref{eqn:GGG}), the following is true for all vectors $\gm_\k^T$, 
\bea
\gm_\k^T  \Mat{E} \s &=&  \sum_{j=-\L-\ISI}^{\L+\ISI} (\gm^T_\k\mat{e}_j) \cdot s_j \nn &=&  \sum_{j=0}^\ISI h_j \cdot a_{\k-j} \cdot s_{\k-j}, \label{eqn:G4}
\eea
if we set $s_0 =0$ and $s_\k =0$ for all $|\k| > \L$.

Define the length-$(2m)$ vector $\pmb{\C} \define [\C_{-\L}, \C_{-\L+1}, \cdots, \linebreak[1]\C_{-1}, \C_1, \C_2, \cdots, \C_\L]^T$. Set $\C_0 := -\infty$ and $\C_\k :=0$ for all $|\k| > m$. By setting 
\bea
\pmb{\C} &:=& \Q_i\Del\mat{u} + [\G(\ati)]^T \cdot \Mat{T} (\1-\ati) \nonumber
\eea
and 
\bea
\pmb{\C} &:=& \Q_i\Del\mat{u} + [\G(\ati)]^T \cdot [\Mat{T} (\1-\ati) - 2a_0 \cdot \mat{h}_0 ], \nonumber
\eea
respectively, we can solve both problems (\ref{eqn:Max}) as
\begin{align}
	\max&_{\s \in \{0,1\}^{2\L}} \; \s^T\pmb{\C} - |\G(\ati)\s|^2 \nn 
	&= \max_{\s \in \{0,1\}^{2\L}}  \sum_{\k=-\L}^{\L+\ISI} \C_\k \cdot s_\k - (\gm^T_\tt \Mat{E} \s)^2,\label{eqn:Max2}
\end{align}
where the $\tt$-th term $\gm^T_\tt \Mat{E} \s = \sum_{j=0}^\ISI h_j  a_{\k-j}  s_{\k-j}$.
For the sake of completeness, we provide the dynamic programming procedure that solves (\ref{eqn:Max2}). The \emph{dynamic programming  state} at time $\tau$ equals the length-$\ell$ vector of binary symbols $[s_{\tau-\ell+1},s_{\tau-\ell+2},\cdots, s_{\tau}]^T \in \{0,1\}^{\ell}$. For the benefit of readers knowledgeable in dynamic programming techniques, we illustrate the time evolution of the dynamic programming states in Figure \ref{fig:CS2}. Dynamic programs can be solved with a complexity that is \emph{linear} in the state size~\cite{Viterbi}; in our case we have $2^\ell$ states. The dynamic programming procedure optimizing (\ref{eqn:Max2}) is given in Procedure \ref{proce:DP}.

\newcommand{\Syms}{\bar{\Sym}}

The second part of this subsection addresses the following separate issue. 
Consider the case where some of the probabilities $\Pr{\randb{A}_{t}= \ati}$ equal $0$; one example of such a case is where a modulation code is present in the system~\cite{RLL,Immink}. In these cases we consider the subset $ \Syms \subset \Sym $, explicitly written as 
\bea
	\Syms = \Syms_{t} \define \left\{ \mat{a} \in \Sym :  \Pr{ \bigcap_{j=-m}^m  \{ A_{t+j}=a_{j} \}} =0 \right\} \label{eqn:excl}
\eea
for each time instant $t$. 

If we consider the subsets $ \Syms \subset \Sym $, then Procedure \ref{proce:pdfXmY} has to be modified. The modification of Procedure \ref{proce:pdfXmY} is given as Procedure \ref{proce:FT}; this modification will be justified in the upcoming Section \ref{sect:dens}). Note that Line 4 of Procedure \ref{proce:FT} may also be efficiently solved using dynamic programming techniques.

Thus far, we have completed the statement of our main result Theorem \ref{thm:1} and the two main Corollaries 1 and 2. We have given Procedures \ref{proce:pdfXmY}-\ref{proce:FT} (also see Appendix \ref{app:spec}), used to efficiently evaluate the given closed-form expressions.

\begin{algorithm}[!t]
	\SetLine
	\linesnumbered
	\nocaptionofalgo
	\SetKwInput{Init}{Initialize} 
	\Init{Set $F_{\Xtk-\Ytk}(\mat{r}) := 0$ for all $\mat{r} \in \Real^\kappa$;}
	\While{$F_{\Xtk-\Ytk}(\mat{r})$ not converged}{
	  Perform Lines 2-3 of Procedure \ref{proce:pdfXmY}\;
	  Compute $\d_i = \d_i(\om, \atk)$ for all $i \in \{1,2,\cdots, n\}$ by computing
	  \bea
		\mathop{\max}_{k : \; \Alp(\Mat{E}\s_k,  \mat{a}) \in \Syms_{t_i}} \s_k^T\Q_i \Del \mat{u} + \mu_k(\ati), \nn
		\mathop{\max}_{k : \; \Alp(\Mat{E}\s_k + \mat{e}_0, \mat{a} )\in \Syms_{t_i}} \s_k^T\Q_i \Del \mat{u} + \nu_k(\ati), \nonumber
	 \eea
	 see (\ref{eqn:m}), where $\mu_k(\ati)$ and $\nu_k(\ati)$ denote the $k$-th components of $\muY(\ati)$ and $\nuX(\ati)$, see (\ref{eqn:muY}) and (\ref{eqn:nuX}). 
	Both $\Mat{E}$ and $\mat{e}_0$ are given in (17) and (18). 
	Also, the vector $\Alp(\mat{e}, \ati) = [\alp_{-\L-\ISI},$ $ \alp_{-\L-(\ISI-1)}, $$  \cdots,  \alp_{\L+\ISI} ]^T$ satisfies \newline \mbox{}~~~~~~~~
	  $
		\alp_j =$$ \alp_j(e_j,a_j)  = \left\{\begin{array}{rl}
			-a_j & \mbox{ if } e_j = 1,  \\
			a_j & \mbox{ if } e_j = 0.
				\end{array} \right.
		$\;
		Perform Lines 5-6 of Procedure \ref{proce:pdfXmY}\;
	  }	
	\caption{\textbf{Procedure 3}: Evaluating $F_{\pmb{\rand{X}}_{\textbf{t}_1^n}-\pmb{\rand{Y}}_{\textbf{t}_1^n}}(\textbf{r})$, for candidate subsets $\Syms \subset \Sym$, see (\ref{eqn:excl})}
	\label{proce:FT}
\end{algorithm}

\renewcommand{\Beta}{\pmb{\Gam}}
\renewcommand{\alp}{\theta}
\renewcommand{\mat}[1]{\begin{bf} #1 \end{bf}}

\section{Proof of Theorem 1 and Some Comments} \label{sect:dens}

\subsection{Proof of Theorem \ref{thm:1}} \label{ssect:prf}

We begin by showing the correctness of Theorem \ref{thm:1}, which was stated in the previous section. Define the random variable 
\bea
\v_t &\define& \rand{A}_{t} \cdot \hO^T\pmb{\rand{W}}_t. \label{eqn:v} 
\eea
It is easy to verify that $\v_t$ is Gaussian: recall that $\pmb{\rand{W}}_t \define [\rand{W}_{t-M},\rand{W}_{t-M+1},\cdots,\rand{W}_{t+M+I}]^T$ is the neighborhood of (Gaussian) noise samples. To improve clarity, we shall introduce the following new notation, both used only in this section
\bea
 \!\!\! \!\!\! \!\!\! \alp(\randb{A}_t) \!\!\! &\define& \!\!\! \rand{A}_t \cdot [\Mat{T}(\1 - \randb{A}_t)]^T \mat{h}_0 - |\mat{h}_0|^2, \nn
\!\!\! \!\!\! \!\!\!  \Beta = \Beta(\Atk) \!\!\! &\define& \!\!\!  \bdiag\left(\GMats\right).
  \label{eqn:short}
\eea
Using (\ref{eqn:short}), we may now more compactly write 
\bea
\Mat{Q}\Del^2\Mat{Q}^T &=& \Beta^T \Kw \Beta, \nn
\Mat{F}(\Atk) &=& \diag(A_{t_1},A_{t_2},\cdots, A_{t_n}) \otimes \mat{h}_0^T  \Kw \Beta \nn &&\cdot [\I_n \otimes \S\S^T] \cdot 
\Q \Del\Iv, \nn
\pmb{\eta}(\U,\Atk) &=& [\alp(\randb{A}_{t_1}),\alp(\randb{A}_{t_2}),\cdots, \alp(\randb{A}_{t_n})]^T + \Mat{F}(\Atk)\U, \nn
\eea
where $\Q=\Q(\Atk)$ and $\Del=\Del(\Atk)$ are given in Definition \ref{defn:QD}, matrix $\Mat{F}(\Atk)$ in (\ref{eqn:F}), and $\pmb{\eta}(\U,\Atk)$ in (\ref{eqn:eta}).

\begin{pro} \label{pro:XaYeqdist}
The random variables $\rand{X}_t$ and $\rand{Y}_t$ in (\ref{eqn:XaY}) can be written as 
\bea
\rand{X}_t &=& \max \left( [\G(\randb{A}_t)\S]^T \pmb{\rand{W}}_t + \nuX(\randb{A}_t)  +   [\v_t  + \alp(\randb{A}_t)]\cdot \1 \right),\nn
\rand{Y}_t &=& \max \left( [\G(\randb{A}_t)\S]^T \pmb{\rand{W}}_t + \muY(\randb{A}_t) \right),\nonumber
\eea
where $\alp(\randb{A}_t) \define \rand{A}_t \cdot [\Mat{T}(\1 - \randb{A}_t)]^T \mat{h}_0 - |\mat{h}_0|^2$ as given in (\ref{eqn:short}).  \hspace*{\fill}\IEEEQEDopen
\begin{proof} \rm
We expand $\D(\pmb{\rand{A}}_t,\vec{a})$ in (\ref{eqn:D}) by substituting for $\pmb{\rand{Z}}_t$ using (\ref{eqn:isi_chan}) to get 
\begin{align} 
&\D(\pmb{\rand{A}}_t,\vec{a}) \nn
&= |\pmb{\rand{Z}}_t - \Mat{T}\1 - \Mat{H}\pmb{\rand{A}}_t |^2 - |\pmb{\rand{Z}}_t - \Mat{T}\1 - \Mat{H}\vec{a}|^2 \nn
&= |-\pmb{\rand{W}}_t + \Mat{T}(\pmb{\rand{A}}_t - \1)|^2  \nn
&~~~~- |-\pmb{\rand{W}}_t + \Mat{T}(\pmb{\rand{A}}_t - \1) + \Mat{H}(\pmb{\rand{A}}_t - \vec{a})|^2 \nn
&= -2[-\pmb{\rand{W}}_t+ \Mat{T}(\pmb{\rand{A}}_t - \1)]^T\Mat{H}(\pmb{\rand{A}}_t - \vec{a}) - |\Mat{H}(\pmb{\rand{A}}_t - \vec{a})|^2. \nn\label{xmas}
\end{align}
We substitute (\ref{xmas}) into the definition of $\rand{X}_t$ and $\rand{Y}_t$ in (\ref{eqn:XaY}) to obtain
\bea 
\rand{X}_t &=& \mathop{\max_{\mat{a} \in \Sym }}_{a_0 \neq \rand{A}_t} [\pmb{\rand{W}}_t+ \Mat{T}(\1-\randb{A}_t)]^T\left( \frac{1}{2} \cdot \Mat{H}(\randb{A}_t - \mat{a})\right)\nn &&  - \left| \frac{1}{2} \cdot \Mat{H}(\randb{A}_t - \mat{a})\right|^2 , \nn
\rand{Y}_t &=& \mathop{\max_{\mat{a} \in \Sym }}_{a_0 = \rand{A}_t} [\pmb{\rand{W}}_t+ \Mat{T}(\1-\randb{A}_t)]^T\left( \frac{1}{2} \cdot \Mat{H}(\randb{A}_t - \mat{a})\right) \nn && - \left| \frac{1}{2} \cdot \Mat{H}(\randb{A}_t - \mat{a})\right|^2. \label{eqn:XaY01}
\eea
Using (\ref{eqn:mate}) and Definitions (\ref{eqn:Sym}), (\ref{eqn:Emat}) and (\ref{eqn:S}), we establish the following equality of sets
\begin{align}
 &\Ev{ \frac{1}{2} (\rand{A}_t - \mat{a}) : \mat{a} \in \Sym, a_0 \neq \rand{A}_t } \nn &\quad\quad\quad= \Ev{\diag(\randb{A}_t) \Mat{E} \s_j + A_t \cdot \mat{e}_0 : 0 \leq j \leq 2^{2m}-1 }, \nn
 &\Ev{ \frac{1}{2} (\rand{A}_t - \mat{a}) : \mat{a} \in \Sym, a_0 =  \rand{A}_t } \nn &\quad\quad\quad= \Ev{\diag(\randb{A}_t) \Mat{E} \s_j: 0 \leq j \leq 2^{2m}-1 }. \label{eqn:mod}
\end{align}
Next, we utilize both (\ref{eqn:XaY01}) and (\ref{eqn:G}) to rewrite (\ref{xmas}) as 
\bea
\rand{X}_t  \!\!&=& \!\!\!\!\!\!\! \max_{j \in \{0,1,\cdots,2^{2\L}-1\}} [\pmb{\rand{W}}_t+ \Mat{T}(\1-\randb{A}_t)]^T [\Mat{G}(\randb{A}_t)\s_j + A_t \mat{h}_0]  \nn && - |\Mat{G}(\randb{A}_t)\s_j + A_t \mat{h}_0|^2, \nn
\rand{Y}_t\!\! &=& \!\!\!\!\!\!\!\max_{j \in \{0,1,\cdots,2^{2\L}-1\}} [\pmb{\rand{W}}_t+ \Mat{T}(\1-\randb{A}_t)]^T [\Mat{G}(\randb{A}_t)\s_j ] \nn && - |\Mat{G}(\randb{A}_t)\s_j |^2. \label{eqn:XaY02}
\eea
By the definition of $\muY(\randb{A}_t)$ in (\ref{eqn:muY}) and $\Mat{S}$ in Definition (\ref{eqn:S}), the expression for $Y_t$ in the proposition statement follows from (\ref{eqn:XaY02}). For $X_t$, we continue to expand (\ref{eqn:XaY02}) to get
\begin{align}
&\rand{X}_t \nn &= \max \Big( [\Mat{G}(\randb{A}_t)\Mat{S} ]^T\randb{W}_t  + \overbrace{\muY(\randb{A}_t) - 2A_t \cdot \mat{h}_0^T \Mat{G}(\randb{A}_t)\S}^{\nuX(\randb{A}_t)} 
  \nn
& \quad \quad \quad + \underbrace{A_t \cdot \mat{h}_0^T \randb{W}_t}_{\v_t} \cdot \1  +  \{\underbrace{A_t [\Mat{T}(\1-\randb{A}_t)]^T\mat{h}_0 - |\mat{h}_0|^2}_{\alp(\randb{A}_t)} \} \cdot  \1	\Big),
\nonumber
\end{align}
in the same form as in the proposition statement, where $\nuX(\randb{A}_t)$ is defined in (\ref{eqn:nuX}), and $\v_t$ in (\ref{eqn:v}), and $\alp(\randb{A}_t)$ in (\ref{eqn:short}).
\end{proof}
\end{pro}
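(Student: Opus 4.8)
The plan is to reduce both $\rand{X}_t$ and $\rand{Y}_t$ in (\ref{eqn:XaY}) to a maximum of an affine function of the noise neighborhood $\randb{W}_t$, with every $\randb{W}_t$-independent contribution absorbed into the offset vectors $\muY,\nuX$ and the scalars $\v_t,\alp(\randb{A}_t)$. I would do this in three moves: expand the squared-distance difference $\D$, re-parametrize the candidate set, and then pull the maximization-index-independent terms out of the $\max$. For the first move, substitute $\randb{Z}_t=(\Mat{H}+\Mat{T})\randb{A}_t-\randb{W}_t$ from (\ref{eqn:isi_chan}) into (\ref{eqn:D}) with arguments $\randb{A}_t$ and $\mat{a}$. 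Both $\randb{Z}_t-\Mat{T}\1-\Mat{H}\randb{A}_t$ and $\randb{Z}_t-\Mat{T}\1-\Mat{H}\mat{a}$ reduce to $-\randb{W}_t+\Mat{T}(\randb{A}_t-\1)$, the second shifted by $\Mat{H}(\randb{A}_t-\mat{a})$; completing the square cancels the common term and leaves $\tfrac{1}{4}\D(\randb{A}_t,\mat{a})=[\randb{W}_t+\Mat{T}(\1-\randb{A}_t)]^T\tfrac{1}{2}\Mat{H}(\randb{A}_t-\mat{a})-|\tfrac{1}{2}\Mat{H}(\randb{A}_t-\mat{a})|^2$, which is the form (\ref{eqn:XaY01}).

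Second, re-parametrize the candidates. Because $\Mat{H}$ in (\ref{eqn:HandT}) has $\ISI$ all-zero columns on each side, $\Mat{H}(\randb{A}_t-\mat{a})$ depends only on the inner coordinates $|i|\le\L$ of $\randb{A}_t-\mat{a}$; on those coordinates $\tfrac{1}{2}(A_{t+i}-a_i)$ equals $0$ when $a_i=A_{t+i}$ and equals $A_{t+i}$ otherwise, so as $\mat{a}$ runs over $\Sym$ this quantity runs over $\diag(\randb{A}_t)$ times a free binary pattern. This yields the set identities (\ref{eqn:mod}), with the extra shift $A_t\mat{e}_0$ present exactly when $a_0\neq\rand{A}_t$, using only (\ref{eqn:mate}) and Definitions \ref{defn:E} and \ref{defn:S}. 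Feeding (\ref{eqn:mod}) back into (\ref{eqn:XaY01}), and using $\G(\randb{A}_t)=\Mat{H}\diag(\randb{A}_t)\Mat{E}$ from (\ref{eqn:G}) together with $\Mat{H}\mat{e}_0=\mat{h}_0$, both reliabilities become a maximum over $j$ of $[\randb{W}_t+\Mat{T}(\1-\randb{A}_t)]^T\G(\randb{A}_t)\s_j-|\G(\randb{A}_t)\s_j|^2$ for $\rand{Y}_t$, and the same with $\G(\randb{A}_t)\s_j$ replaced by $\G(\randb{A}_t)\s_j+\rand{A}_t\mat{h}_0$ for $\rand{X}_t$; this is (\ref{eqn:XaY02}).

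Third, read off the offsets. For $\rand{Y}_t$ the $j$-th term is, by the definition (\ref{eqn:muY}) of $\muY$, exactly the $j$-th entry of $[\G(\randb{A}_t)\S]^T\randb{W}_t+\muY(\randb{A}_t)$, which is the claimed expression. For $\rand{X}_t$, expand $|\G(\randb{A}_t)\s_j+\rand{A}_t\mat{h}_0|^2$ using $\rand{A}_t^2=1$ and expand the linear term $[\randb{W}_t+\Mat{T}(\1-\randb{A}_t)]^T[\G(\randb{A}_t)\s_j+\rand{A}_t\mat{h}_0]$; the $j$-dependent part becomes the $j$-th entry of $[\G(\randb{A}_t)\S]^T\randb{W}_t+\nuX(\randb{A}_t)$ with $\nuX$ as in (\ref{eqn:nuX}), while the leftovers $\rand{A}_t\mat{h}_0^T\randb{W}_t=\v_t$ (see (\ref{eqn:v})) and $\rand{A}_t[\Mat{T}(\1-\randb{A}_t)]^T\mat{h}_0-|\mat{h}_0|^2=\alp(\randb{A}_t)$ (see (\ref{eqn:short})) are constant in $j$ and factor out of the $\max$ as the vector $[\v_t+\alp(\randb{A}_t)]\cdot\1$, giving the stated form of $\rand{X}_t$.

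I expect the real obstacle to be the re-parametrization (\ref{eqn:mod}): one must notice that the identification need only hold after premultiplication by $\Mat{H}$, so that the ``don't-care'' boundary coordinates of $\randb{A}_t-\mat{a}$ (where the transmitted $\randb{A}_t$ need not match the all-ones padding of $\mat{a}$) play no role, and one must keep the $a_0=\rand{A}_t$ versus $a_0\neq\rand{A}_t$ split synchronized with the $\diag(\randb{A}_t)$, $\Mat{E}$ and $\mat{e}_0$ index conventions. Everything past that point is sign-bookkeeping together with the identity $\rand{A}_t^2=1$.
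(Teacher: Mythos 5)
Your proposal is correct and follows essentially the same route as the paper's proof: expand $\D(\randb{A}_t,\mat{a})$ via the channel equation to reach (\ref{eqn:XaY01}), re-parametrize the candidate differences as in (\ref{eqn:mod}), and then separate the $j$-independent terms into $\v_t$ and $\alp(\randb{A}_t)$ using $\rand{A}_t^2=1$. Your remark that the identification of candidate differences need only hold after premultiplication by $\Mat{H}$ (so the boundary coordinates are ``don't cares'') is in fact a slightly more careful reading of (\ref{eqn:mod}) than the paper's literal set equality, but it is the same argument.
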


Recall $\Q = \Q(\Atk)$ and $\Del = \Del(\Atk)$ from Definition \ref{defn:QD}. To prove Theorem \ref{thm:1} we require the following lemma. 

\renewcommand{\zeta}{U}

\begin{lem} \label{lem:eig}
Let $\pmb{\zeta}$ denote a standard zero-mean identity-covariance Gaussian random vector of length-$(2\L\kappa)$. Recall $\Wtk$ in (\ref{eqn:Wtk}). The following transformation of random vectors holds
\begin{align}
  &\ba{c} \S^T \Q_1(\Atk) \\ \S^T \Q_2(\Atk)  \\ \vdots \\ \S^T \Q_n (\Atk) \ea \Del(\Atk) 
\pmb{\zeta} \nn
&= \ba{@{}c@{}c@{}c@{}c@{}} \Mat{G}(\randb{A}_{t_1})\S \\ & \Mat{G}(\randb{A}_{t_2})\S  \\ & & \ddots \\ & & & \Mat{G}(\randb{A}_{t_n})\S \ea^T 
\ba{c} \pmb{\rand{W}}_{t_1}  \\ \pmb{\rand{W}}_{t_2} \\ \vdots  \\ \pmb{\rand{W}}_{t_n} 
\ea,
\end{align}
or more concisely we equivalently write
\begin{align}
	(\I_n \otimes \S^T)&\cdot \Q(\Atk)\Del(\Atk)\pmb{\zeta} \nn & = (\I_n \otimes \S^T)  \cdot \Beta(\Atk)^T \Wtk.  \label{eqn:eig}
\end{align} 
using $\Q(\Atk)$ in (\ref{eqn:Qpart}) and $\Beta(\Atk)$ in (\ref{eqn:short}). \hspace*{\fill}\IEEEQEDopen
\begin{proof} \rm
After conditioning on $\Atk$, both vectors that appear on either side of (\ref{eqn:eig}), are seen to be zero mean Gaussian random vectors (recall that $W_t$ is zero mean). Therefore to prove the lemma, we only need to verify that after conditioned on $\Atk$, both l.h.s. and r.h.s. of (\ref{eqn:eig}) have the same covariance matrix. This is easily done by using property i) of $\Q = \Q(\Atk)$ in Definition \ref{defn:QD}, which yields
\bea
	\E\left\{\Q\Del\pmb{\zeta}\pmb{\zeta}^T\Del\Q^T \right | \Atk\}  &=& \Q(\Atk)\Del(\Atk)^2\Q(\Atk)^T \nn &=& \Beta(\Atk)^T\Kw\Beta(\Atk). \nonumber
\eea
\end{proof}
\end{lem}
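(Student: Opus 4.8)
The goal is to prove Lemma \ref{lem:eig}, which asserts an equality in distribution between two Gaussian vectors after conditioning on $\Atk$. The plan is to reduce the claim to a covariance computation, since both sides are zero-mean Gaussians (conditioned on $\Atk$), and two zero-mean Gaussians with the same covariance are equal in distribution.

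First I would observe that, conditioned on $\Atk = \atk$, the left-hand side $(\I_n \otimes \S^T)\Q(\atk)\Del(\atk)\pmb{\zeta}$ is a linear image of the standard Gaussian $\pmb{\zeta}$, hence zero-mean Gaussian; its covariance is
\[
(\I_n \otimes \S^T)\Q\Del\,\E\{\pmb{\zeta}\pmb{\zeta}^T\}\,\Del\Q^T(\I_n \otimes \S) = (\I_n \otimes \S^T)\Q\Del^2\Q^T(\I_n \otimes \S).
\]
Similarly, the right-hand side $(\I_n \otimes \S^T)\Beta(\atk)^T\Wtk$ is a linear image of the zero-mean Gaussian vector $\Wtk$, hence zero-mean Gaussian, with covariance
\[
(\I_n \otimes \S^T)\Beta^T\,\E\{\Wtk\Wtk^T\}\,\Beta(\I_n \otimes \S) = (\I_n \otimes \S^T)\Beta^T\Kw\Beta(\I_n \otimes \S),
\]
using Definition \ref{defn:Kw} for $\Kw = \E\{\Wtk\Wtk^T\}$.

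Next I would invoke property i) of $\Q = \Q(\Atk)$ from Definition \ref{defn:QD}, namely the decomposition $\Q\Del^2\Q^T = \Beta^T\Kw\Beta$ (in the compact notation of (\ref{eqn:short}), $\Beta = \bdiag(\GMats)$). Substituting this into the covariance of the left-hand side immediately yields $(\I_n \otimes \S^T)\Beta^T\Kw\Beta(\I_n \otimes \S)$, which matches the covariance of the right-hand side. Since both vectors are zero-mean Gaussian with identical covariance, they are equal in distribution, conditioned on $\Atk$. Unconditioning, the transformation holds as stated. Finally, I would note that the displayed block-matrix form of the claim is simply the expanded version of the concise Kronecker-notation identity (\ref{eqn:eig}), obtained by writing out $\I_n \otimes \S^T$ and $\Beta$ block by block; no additional argument is needed.

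The only subtlety — and the step I expect to need the most care — is making sure the statement is interpreted as an equality \emph{in distribution} conditioned on $\Atk$, rather than a pathwise identity (the two sides are built from different underlying randomness, $\pmb{\zeta}$ versus $\Wtk$). Once that is made explicit, the proof is a one-line covariance match; there is no real obstacle beyond bookkeeping with the Kronecker products and confirming that $\Del^2\Q^T$ appears correctly (i.e. that $\Del$ is diagonal so $\Del\Del^T = \Del^2$, as guaranteed by Definition \ref{defn:QD}).
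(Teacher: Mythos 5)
Your proposal is correct and follows essentially the same route as the paper: both sides are zero-mean Gaussian conditioned on $\Atk$, so it suffices to match covariances, which follows at once from property i) of Definition \ref{defn:QD}, i.e. $\Q\Del^2\Q^T = \Beta(\Atk)^T\Kw\Beta(\Atk)$. Your explicit remark that the identity is an equality in distribution (the two sides being driven by different randomness, $\pmb{\zeta}$ versus $\Wtk$) is a point the paper leaves implicit, and your retention of the $(\I_n\otimes\S^T)$ factors is if anything slightly more careful than the paper's direct comparison of the inner matrices.
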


We are now ready to prove Theorem \ref{thm:1}. The proof is split up into the following two seperate cases : 
\bitm
	
	\item $\rank[\Beta(\Atk)^T \Kw \Beta(\Atk)]=2mn$, and
	\item $\rank[\Beta(\Atk)^T \Kw \Beta(\Atk)] < 2mn$ for some realization $\Atk=\atk$.
\eitm
We begin with the first case.

\renewcommand{\Omega}{U}

\begin{proof}[Proof of Theorem \ref{thm:1} when $\rank(\Beta(\Atk)^T\pmb{\mathcal{K}}_{\pmb{\rand{W}}}\Beta(\Atk)) = 2\L\kappa$]  \newline\indent
We first derive the following equalities 
\begin{align}
 (\Del\Iv\Q^T) &(\I_n \otimes \S \S^T) \Beta(\Atk)^T\Wtk \nn &= (\Del\Iv\Q^T) (\I_n \otimes \S \S^T)  \Q\Del\pmb{\zeta} \nn &= \Del\Iv \Del\pmb{\zeta} = \U. \label{eqn:Vinv}
\end{align}
The first two equalities follow by respectively applying properties i) and ii) of the matrix $\Q = \Q(\Atk)$. The last equality holds because by virtue of the assumption $\rank(\Beta(\Atk)^T\Kw\Beta(\Atk)) = 2mn$, in which then $\Del\Iv$ is strictly an inverse of $\Del$. Recall both $\v_{t_i} \define A_{t_i} \cdot \mat{h}_0^T \randb{W}_{t_i}$ and $\Vtk \define [\v_{t_1},\v_{t_2},\cdots, \v_{t_n}]^T$. Taking (\ref{eqn:Vinv}) together with (\ref{eqn:v}), we have the following transformation 
\bea
\ba{c}  \Vtk \\ \pmb{\zeta} \ea
= \ba{c} \diag(A_{t_1},A_{t_2},\cdots, A_{t_n}) \otimes \hO^T \\ (\Del\Iv\Q^T) (\I_n \otimes \S \S^T) \Beta(\Atk)^T \ea
\Wtk. \label{eqn:main01}
\eea
Consider the conditional event
\bea
\Ev{\Xtk - \Ytk \leq \mat{r} | \Atk , \pmb{\zeta }} \label{eqn:mainevt}
\eea
where $\mat{r}=[r_1,r_2,\cdots, r_n]^T \in \Real^{n}$. It is clear from both Proposition \ref{pro:XaYeqdist} and (\ref{eqn:main01}), that after conditioning on both $ \Atk $ and $ \pmb{\zeta }$ in (\ref{eqn:mainevt}), the only quantity that remains random in (\ref{eqn:mainevt}) is the Gaussian vector $\Vtk$. 
Using Lemma \ref{lem:eig}, we have the transformation 
\[
\S^T \Q_i(\Atk) \Del(\Atk) \pmb{\Omega} = [\G(\randb{A}_{t_i})\S]^T \randb{W}_{t_i},
\]
therefore we may rewrite both $X_{t_i}$ and $Y_{t_i}$ from Proposition \ref{pro:XaYeqdist} as
\bea
\rand{X}_{t_i} &=& \max \left(\S^T \Q_i\Del \pmb{\Omega} + \nuX(\randb{A}_{t_i}) \right) +   \v_{t_i}  + \alp(\randb{A}_{t_i}),\nn
\rand{Y}_{t_i} &=& \max \left( \S^T\Q_i\Del \pmb{\Omega} + \muY(\randb{A}_{t_i}) \right).\label{eqn:mod3}
\eea
The event (\ref{eqn:mainevt}) can then be written as
\begin{align}
 & \Ev{\Xtk - \Ytk \leq \mat{r} | \Atk , \pmb{\zeta }} 
= \bigcap_{1\leq i \leq \kappa } \Ev{X_{t_i} \leq r_i + Y_{t_i} | \Atk , \pmb{\zeta }} \nn
\!\!\!\! &= \!\!\!\! 
\bigcap_{1\leq i \leq \kappa }\left\{  \left.
\max \left(\begin{array}{l}
[\S^T \Q_i \Del\pmb{\Omega} + \nuX(\randb{A}_{t_i}) ] \\
 + \v_{t_i} + \alp(\randb{A}_{t_i})  
\end{array} \right)
\leq r_i + Y_{t_i}
\right|   \!\!\!\!
\begin{array}{c}
\Atk ,  \pmb{\zeta}\end{array} \!\!\!\!
\right\} \nn  
\!\! &= \! \!\!\!\!
\bigcap_{1\leq i \leq \kappa }  \!\!\!\left.\left\{ \!\!
\begin{array}{l}
 \v_{t_i}  + \\ \alp(\randb{A}_{t_i}) 
\end{array} \!\!\!\!
\leq \!\left(  \!\!\!\!
\begin{array}{c}
 r_i + \max \left[\S^T \Q_i\Del\pmb{\Omega} + \muY(\randb{A}_{t_i})\right]  \\
 - \max \left[\S^T\Q_i\Del\pmb{\Omega} + \nuX(\randb{A}_{t_i})\right]  
\end{array}  \!\!\! \right)
\right|    \!\!\!\!
\begin{array}{c} 
\Atk ,  \pmb{\zeta}\end{array}  \!\!\!\!
\right\}\!.\nn 
\label{eqn:main1}
\end{align}
Continuing from (\ref{eqn:main1}), we utilize (\ref{eqn:m}) to rewrite 
\bea
&& \Ev{\Xtk-\Ytk \leq \mat{r} | \Atk, \pmb{\Omega}} \nn
&=&
\bigcap_{1\leq i \leq \kappa }\left\{  \v_{t_i} + \alp(\randb{A}_{t_i }) \leq r_i + \delta_i(\pmb{\Omega},\Atk) |\Atk , \pmb{\Omega} \right\}.~~~~~ \label{eqn:main001}
\eea
We now determine both the mean and variance of $\Vtk$, after conditioning on both $\Atk$ and $\pmb{\Omega}$. 
From (\ref{eqn:main01}), we derive the formula 
\bea
\E\{\Vtk \pmb{\zeta}^T | \Atk \} &=& \diag(A_{t_1},A_{t_2},\cdots, A_{t_n}) \otimes \mat{h}_0^T \Kw  \nn 
&&\cdot~\Beta(\Atk) (\I_n\otimes\S\S^T)\Q \Del\Iv  \nn
&\define& \Mat{F}(\Atk), \label{eqn:F2}
\eea
where $\Mat{F}(\Atk)$ is given in (\ref{eqn:F}) . Next, we compute the conditional mean 
\bea
 \E \left\{\Vtk | \Atk,\pmb{\zeta}\right\} &=& \E\{\Vtk |\Atk \} + \E\{\Vtk \pmb{\zeta}^T | \Atk \}\U \nn 
 	&=&  \Mat{F}(\Atk) \U,\label{eqn:m2}
\eea
where the second equality follows from $\E\{\Vtk |\Atk \}=0$ (because $\Wtk$ has zero mean, see (\ref{eqn:v})), and substituting (\ref{eqn:F2}). The conditional covariance matrix $\mbox{\rm \Var} \left\{\Vtk  | \Atk, \pmb{\zeta} \right\} $ is obtained as follows
\begin{align}
 & \mbox{\rm \Var} \left\{\Vtk  | \Atk, \pmb{\zeta} \right\} \nn
&= \E\{\Vtk \Vtk^T| \Atk\} - \E\{\Vtk \pmb{\zeta}^T| \Atk\} \cdot \E\{\pmb{\zeta} \Vtk^T | \Atk\}   \nn
&=  \diag(A_{t_1},A_{t_2},\cdots, A_{t_n}) \otimes \mat{h}_0^T \Kw \nn &~~~\cdot\diag(A_{t_1},A_{t_2},\cdots, A_{t_n}) \otimes \mat{h}_0 
  - \Mat{F}(\Atk)\Mat{F}(\Atk)^T \nn &\define \Kv(\Atk) ,\label{eqn:nu2}
\end{align} 
where $\Kv(\Atk)$ is given in (\ref{eqn:nu}). The expression for $F_{\Xtk-\Ytk} (\mat{r})$ in Theorem \ref{thm:1} now follows easily from (\ref{eqn:main001}) 
\begin{align}
\big\{\Xtk-\Ytk \leq &\mat{r} | \Atk, \pmb{\Omega} \big\}\nn  &=  
 \Big\{\Vtk +  [\alp(\randb{A}_{t_1}),\alp(\randb{A}_{t_2}),\cdots, \alp(\randb{A}_{t_n})]^T \nn
 & \quad\quad\; \leq \mat{r} +   \dmat(\U,\Atk) \Big|\Atk, \U \Big\} \nonumber
\end{align}
and noticing that the random vector 
\bea
	\Vtk +  [\alp(\randb{A}_{t_1}),\alp(\randb{A}_{t_2}),\cdots, \alp(\randb{A}_{t_n})]^T
\eea
is (conditionally on $\Atk$ and $\U$) Gaussian distributed with distribution function 
	\[ \Phi_{\Kv(\Atk)}(\mat{r} - \pmb{\eta}(\U,\Atk)), \]
where both the conditional mean and covariance $\pmb{\eta}(\U,\Atk)$ and $\Kv(\Atk)$, are given respectively in (\ref{eqn:m2}) and (\ref{eqn:nu2}). 
\end{proof}

Next we consider the other case where the rank of $\Beta(\Atk)^T\textbf{K}_{\pmb{\rand{W}}}\Beta(\Atk) < 2mn$ for some value of $\Atk = \atk$.  In this case, the arguments of the preceding proof fail in equation (\ref{eqn:Vinv}), where the final equality does not hold because then $\Del\Iv$ is strictly not the inverse of $\Del$. However as we soon shall see, the expression for $F_{\Xtk-\Ytk}(\mat{r})$ in Theorem \ref{thm:1} still holds for this case.

\renewcommand{\Om}{\pmb{\Omega}_1^j}
\newcommand{\Qb}{\bar{\Q}}
\newcommand{\Delb}{\bar{\Del}}
\renewcommand{\O}{\pmb{\Omega}}

\begin{proof}[Proof of Theorem \ref{thm:1} when $\rank(\Beta(\Atk)^T\pmb{\mathcal{K}}_{\pmb{\rand{W}}}\Beta(\Atk)) < 2\L\kappa$ for some $\Atk= \atk$] \rm \newline\indent Recall that the matrix $[\Del(\Atk)]\Iv = \Del\Iv$ is formed by only reciprocating the non-zero diagonal elements of $\Del(\Atk) = \Del$. 
For a particular realization $\Atk=\atk$, let the value $j=\rank(\Beta(\Atk)^T\Kw\Beta(\Atk) )$ equal the rank of the matrix $\Beta(\Atk)^T\Kw\Beta(\Atk) $. Consider what happens if $j < 2mn$. Without loss of generality, assume that all non-zero diagonal elements of $\Del(\Atk) = \Del$, are located at the first $j < 2mn$ diagonal elements of $\Del$. Define the following size-$j$ quantities
\bitm
\item the random vector $\Om= [\Omega_1,\Omega_2, \cdots, \Omega_j]^T$, a truncated version of $\U= [\Omega_1,\Omega_2, \cdots, \Omega_{2mn}]^T$. 
\item the size $2mn$ by $j$ matrix $\Qb$, containing the first $j$ columns of the $\Q$, see Definition \ref{defn:QD}. 
\item the size $j$ diagonal square matrix $\Delb$, containing the $j$ positive diagonal elements of $\Del$, also see Definition \ref{defn:QD}. 
\eitm
If we substitute the new quantities $\Om$, $\Qb$ and $\Delb$ for $\U$, $\Q$ and $\Del$ in equation (\ref{eqn:Vinv}), it is clear that (\ref{eqn:Vinv}) holds true, i.e.,
\begin{align}
 (\Delb\Iv\Qb^T)& (\I_n \otimes \S\S^T ) \Beta(\Atk)^T\Wtk \!\!\!\! \nn &=  \!(\Delb\Iv\Qb^T) (\I_n \otimes \S\S^T )  \Qb\Delb\Om \nn &= \Delb\Iv \Delb \Om = \Om,
\end{align}
where note from Definition \ref{defn:QD} that it must be true that $\Qb^T(\I_n \otimes \S\S^T ) \Qb = \I_j$, here $\I_j$ is the size $j$ identity matrix. Hence, Theorem \ref{thm:1} clearly holds when we substitute $\Om$, $\Qb$ and $\Delb$ for $\U$, $\Q$ and $\Del$.

Further, we can verify the following facts:
\bitm
\item $\Qb_i \Delb_i \Om = \Q_i\Del \pmb{\Omega}$, and therefore
\item $\dmat(\Om,\Atk)= \dmat(\pmb{\Omega},\Atk)$. Also, 
\item $\Mat{F}(\Atk)$ remains unaltered whether we use $\Q,\Del$ or $\Qb, \Delb$, therefore 
\item $\pmb{\eta}(\Om,\Atk)= \pmb{\eta}(\pmb{\Omega},\Atk)$. Also, 
\item $\Kv(\Atk)$ remains unaltered whether we use $\Q,\Del$ or $\Qb, \Delb$.
\eitm 
Thus we conclude that 
\begin{align}
&\E \Ev{ \left.\Phi_{\Kv(\Atk)}(\dmat(\Om,\Atk) - \pmb{\eta}(\Om,\Atk)) \right|\Atk }\nn & = \E \Ev{ \left. \Phi_{\Kv(\Atk)}(\dmat(\O,\Atk) - \pmb{\eta}(\O,\Atk)) \right| \Atk }\nonumber 
\end{align}
must hold, and thus Theorem \ref{thm:1} must be true even when $\rank [\Beta(\Atk)^T\Kw\Beta(\Atk)] < 2mn$ for certain values of $\Atk = \atk$. 
\end{proof}

We have thus far completed our proof of Theorem \ref{thm:1}; we next show an upper bound for the rank of the matrix $\Kv(\Atk)$ in (\ref{eqn:nu2}). We point out that $\Kv(\Atk)$ sometimes may even have rank $0$, i.e. $\Kv(\Atk)$ equals the zero matrix.

\subsection{Other comments}

\renewcommand{\Beta}{\pmb{\beta}}
\renewcommand{\alp}{\alpha}

The following proposition states that the rank of $\Kv(\Atk)$ depends on both the chosen time instants $\{t_1,t_2,\cdots, t_\kappa\}$, and the MLM truncation length $\L$. The following proposition gives the upper bound on $\rank(\Kv(\Atk))$.

\begin{pro} \label{pro:rankKv}
The rank of $\Kv(\Atk)$ equals at most the number of time instants $t \in \{t_1,t_2,\cdots, t_\kappa\}$, that satisfy $|t - t'| > \L$ for all $t' \in \{t_1,t_2,\cdots, t_\kappa\}\setminus \{t \}$.  \hspace*{\fill}\IEEEQEDopen
\end{pro}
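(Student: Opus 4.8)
The plan is to reinterpret $\Kv(\Atk)$ as a conditional covariance and then exhibit a zero row (and column) for every \emph{non-isolated} time instant. By (\ref{eqn:nu2}), $\Kv(\Atk)$ is the covariance of the Gaussian vector $\Vtk = [\v_{t_1},\dots,\v_{t_n}]^T$, where $\v_{t_i} = A_{t_i}\mat{h}_0^T\randb{W}_{t_i}$ (see (\ref{eqn:v})), conditioned on $\Atk$ and on the Gaussian vector $\U$ --- and, in the rank-deficient branch of the proof of Theorem \ref{thm:1}, on the informative truncation of $\U$ used there. In either branch, Lemma \ref{lem:eig} together with the transformation (\ref{eqn:main01}) shows that this conditioning is at least as informative as knowing the vectors $\S^T\Mat{G}(\randb{A}_{t_i})^T\randb{W}_{t_i}$ for $i=1,\dots,n$. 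Since a positive-semidefinite matrix whose $i$-th diagonal entry vanishes has its whole $i$-th row and column equal to zero, it suffices to show that $\v_{t_i}$ is almost surely determined by $\Atk$ and the collection $\{\S^T\Mat{G}(\randb{A}_{t_j})^T\randb{W}_{t_j}\}_{j=1}^{n}$ whenever $t_i$ is non-isolated.

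To carry this out, first note that because the columns of $\S$ run over all of $\{0,1\}^{2m}$ (Definition \ref{defn:S}) they include every standard unit vector, so knowing $\S^T\Mat{G}(\randb{A}_{t_j})^T\randb{W}_{t_j}$ is the same as knowing every entry of $\Mat{G}(\randb{A}_{t_j})^T\randb{W}_{t_j}$. Using $\Mat{G}(\randb{A}_t) = \Mat{H}\diag(\randb{A}_t)\Mat{E}$ (see (\ref{eqn:G})) and the definitions (\ref{eqn:h_i}), (\ref{eqn:Wt}), (\ref{eqn:mate}), a short index computation shows that the entry of $\Mat{G}(\randb{A}_{t_j})^T\randb{W}_{t_j}$ indexed by $k \in \{-m,\dots,-1\}\cup\{1,\dots,m\}$ equals $A_{t_j+k}\,\mat{h}_0^T\randb{W}_{t_j+k} = \v_{t_j+k}$ (here $A_{t_j+k}$ is a coordinate of $\randb{A}_{t_j}$, hence of $\Atk$). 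Now suppose $t_i$ is non-isolated and pick $j \neq i$ with $0 < |t_i - t_j| \le m$; setting $k = t_i - t_j$ we have $0 < |k| \le m$ and $\v_{t_i} = \v_{t_j + k}$, which is one of the entries just determined. Hence $\v_{t_i}$ is conditionally deterministic, so the $i$-th diagonal entry of $\Kv(\Atk)$, being the conditional variance of $\v_{t_i}$, is zero. (If two of the $t_i$ coincide the corresponding rows of $\Kv(\Atk)$ are identical and the count of isolated instants is unaffected, so we may assume the $t_i$ distinct.)

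Consequently $\Kv(\Atk)$ has a vanishing row and column at every non-isolated index, so its number of nonzero rows --- and therefore its rank --- is at most the number of $t \in \{t_1,\dots,t_n\}$ with $|t - t'| > m$ for all $t' \in \{t_1,\dots,t_n\}\setminus\{t\}$, which is the claim. The only point requiring care is the opening identification: making precise, uniformly over the full-rank and rank-deficient cases treated in the proof of Theorem \ref{thm:1}, the exact sense in which $\Kv(\Atk)$ is the covariance of $\Vtk$ conditioned on data that reveals the vectors $\S^T\Mat{G}(\randb{A}_{t_j})^T\randb{W}_{t_j}$. Once that is pinned down, the remainder is elementary bookkeeping with the banded structure of $\mat{h}_k$ and $\randb{W}_t$.
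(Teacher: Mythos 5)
Your proof is correct and follows essentially the same route as the paper: you condition on $\U$ (equivalently $\pmb{\zeta}$), use Lemma \ref{lem:eig} to see that this determines each $[\G(\randb{A}_{t_j})\S]^T\randb{W}_{t_j}$, and your index computation showing that the $k$-th entry of $\G(\randb{A}_{t_j})^T\randb{W}_{t_j}$ equals $\v_{t_j+k}$ is exactly the content of the paper's Lemma \ref{lem:WandV}. The only (welcome) addition is that you make the final rank bound explicit via the positive-semidefinite zero-diagonal-implies-zero-row observation, where the paper argues more tersely.
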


Proposition \ref{pro:rankKv} is proved using the following lemma. 
\newcommand{\sj}{\mat{s}}

\begin{lem} \label{lem:WandV}
If two time instants $t_1$ and $t_2$ satisfy $|t_1 - t_2| \leq \L$, then observation of $[\G (\randb{A}_{t_1}) \S ]^T\pmb{\rand{W}}_{t_1}$ uniquely determines $\rand{V}_{t_2} \define A_{t_2}\cdot \hO^T\pmb{\rand{W}}_{t_2}$ (and vice versa observation of $[\G (\randb{A}_{t_2}) \S ]^T\pmb{\rand{W}}_{t_2}$ uniquely determines $\rand{V}_{t_1} \define  A_{t_1} \cdot \hO^T\pmb{\rand{W}}_{t_1}$). \hspace*{\fill}\IEEEQEDopen
\begin{proof} \rm
Recall that $\rand{V}_{t_2}$ equals
\bea
\rand{V}_{t_2} \define   A_{t_2} \cdot \hO^T \pmb{\rand{W}}_{t_2} = A_{t_2} \cdot \left( h_0 \rand{W}_{t_2} + \cdots + h_I \rand{W}_{t_2+I}. \right)\nonumber
\eea
If the condition $|t_1 - t_2| \leq \L$ is satisfied, then $\rand{W}_{t_2}, \cdots, \rand{W}_{t_2+I}$ is a length-$(I+1)$ subsequence of $\pmb{\rand{W}}_{t_1} \define [\rand{W}_{t_1-\L}, \rand{W}_{t_1-\L+1}, \linebreak[3]\cdots, \rand{W}_{t_1+\L+I}]^T$. From the definition of $\S$ (see (\ref{eqn:S})) and because $|t_1 - t_2| \leq \L$, then the matrix $\S$ must have a column $\sj$ that satisfies $\Mat{E}\sj = \mat{e}_{t_2-t_1}$, see (\ref{eqn:Emat}). Then for this particular column $\sj$ we have  
\bea
[\G (\randb{A}_{t_1}) \sj ]^T\pmb{\rand{W}}_{t_1} \!\!\!&=& \!\!\![\Mat{H}\diag(\randb{A}_{t_1})\Mat{E}\sj]^T\pmb{\rand{W}}_{t_1} \nn
&=& 
 A_{t_2} \cdot [\Mat{H}\mat{e}_{t_2-t_1}]^T\pmb{\rand{W}}_{t_1} \nn  &=& A_{t_2} \cdot \hO^T  \pmb{\rand{W}}_{t_2} \define \rand{V}_{t_2}, \nonumber
\eea 
where the second equality holds because $\sj$ satisfies $\diag(\randb{A}_{t_1})\Mat{E}\sj = \diag(\randb{A}_{t_1}) \mat{e}_{t_2-t_1} = A_{t_2}\cdot \mat{e}_{t_2-t_1}$, and also
\begin{align}
 &[\Mat{H}\mat{e}_{t_2-t_1}]^T\pmb{\rand{W}}_{t_1} \nn &= [\Mat{H}\mat{e}_{t_2-t_1}]^T [\rand{W}_{t_1-\L}, \rand{W}_{t_1-\L+1}, \cdots, \rand{W}_{t_1+\L+I}]
 \nn &= h_0 \rand{W}_{t_2} + h_1 \rand{W}_{t_2+1} + \cdots + h_I \rand{W}_{t_2+I}. \nonumber
\end{align}
By symmetry, the same argument holds for $[\G (\randb{A}_{t_1}) \S ]^T\pmb{\rand{W}}_{t_2}$ and $\rand{V}_{t_2} \define A_{t_1}\cdot \hO \mat{W}_{t_1}$. 
\end{proof}

\end{lem}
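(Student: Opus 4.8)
The plan is to read off $[\G(\randb{A}_{t_1})\S]^T\randb{W}_{t_1}$ as a vector of length $2^{2\L}$ whose entries are the inner products $[\G(\randb{A}_{t_1})\s_k]^T\randb{W}_{t_1}$ taken over all length-$2\L$ binary vectors $\s_k$ (the columns of $\S$, see Definition \ref{defn:S}). To show that observing this vector determines $\rand{V}_{t_2}$, it suffices to exhibit a single distinguished column $\s_\star$ of $\S$ whose corresponding entry equals $\rand{V}_{t_2}$; one then simply extracts that one coordinate. Thus the whole argument collapses to choosing one index correctly, and I would organize the proof around that choice.

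First I would record that $\rand{V}_{t_2}=A_{t_2}\hO^T\randb{W}_{t_2}=A_{t_2}\sum_{j=0}^{\ISI}h_j\rand{W}_{t_2+j}$ is a fixed linear combination of the noise samples $\rand{W}_{t_2},\dots,\rand{W}_{t_2+\ISI}$. The sole purpose of the hypothesis $|t_1-t_2|\leq\L$ is to guarantee that these samples all lie inside the window $\randb{W}_{t_1}=[\rand{W}_{t_1-\L},\dots,\rand{W}_{t_1+\L+\ISI}]^T$, since the hypothesis forces $t_1-\L\leq t_2$ and $t_2+\ISI\leq t_1+\L+\ISI$. I would isolate this observation up front, because it is the only place the hypothesis enters.

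The heart of the proof is then the choice of $\s_\star$ as the length-$2\L$ binary indicator vector carrying a single $1$ in the coordinate indexed by $t_2-t_1$; this is legitimate because for distinct instants with $|t_1-t_2|\leq\L$ we have $t_2-t_1\in\{-\L,\dots,-1,1,\dots,\L\}$, and because $\S$ enumerates all binary vectors, $\s_\star$ is indeed one of its columns. With this choice $\Mat{E}\s_\star=\mat{e}_{t_2-t_1}$, and I would chase $\G(\randb{A}_{t_1})\s_\star=\Mat{H}\diag(\randb{A}_{t_1})\mat{e}_{t_2-t_1}$ through two simplifications: $\diag(\randb{A}_{t_1})\mat{e}_{t_2-t_1}=A_{t_2}\,\mat{e}_{t_2-t_1}$ (the $(t_2-t_1)$-th diagonal entry of $\diag(\randb{A}_{t_1})$ being $A_{t_2}$), followed by $\Mat{H}\mat{e}_{t_2-t_1}=\mat{h}_{t_2-t_1}$, the basis vector selecting the appropriate shifted channel column of $\Mat{H}$ (see (\ref{eqn:HandT}) and (\ref{eqn:h_i})). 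Taking the inner product with $\randb{W}_{t_1}$ and using that the taps $h_0,\dots,h_\ISI$ of $\mat{h}_{t_2-t_1}$ sit exactly over $\rand{W}_{t_2},\dots,\rand{W}_{t_2+\ISI}$ within $\randb{W}_{t_1}$ yields $\mat{h}_{t_2-t_1}^T\randb{W}_{t_1}=\hO^T\randb{W}_{t_2}$, whence $[\G(\randb{A}_{t_1})\s_\star]^T\randb{W}_{t_1}=A_{t_2}\hO^T\randb{W}_{t_2}=\rand{V}_{t_2}$. The converse direction follows verbatim after interchanging $t_1$ and $t_2$, since the hypothesis is symmetric.

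The main obstacle is purely the index bookkeeping in the third step: one must verify, with no off-by-one error, that multiplying $\mat{e}_{t_2-t_1}$ first by $\diag(\randb{A}_{t_1})$ and then by $\Mat{H}$ produces exactly the shifted channel vector $\mat{h}_{t_2-t_1}$, and that the resulting shift aligns the taps $h_0,\dots,h_\ISI$ precisely with $\rand{W}_{t_2},\dots,\rand{W}_{t_2+\ISI}$ inside $\randb{W}_{t_1}$. Reconciling the three separate indexing conventions (for the columns of $\Mat{E}$, the columns of $\Mat{H}$, and the entries of the noise window) is the only delicate piece; once that alignment is confirmed, the conclusion is immediate.
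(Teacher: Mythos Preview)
Your proposal is correct and follows essentially the same approach as the paper: pick the column $\s_\star$ of $\S$ with $\Mat{E}\s_\star=\mat{e}_{t_2-t_1}$, then push $\G(\randb{A}_{t_1})\s_\star=\Mat{H}\diag(\randb{A}_{t_1})\mat{e}_{t_2-t_1}=A_{t_2}\mat{h}_{t_2-t_1}$ and take the inner product with $\randb{W}_{t_1}$ to recover $\rand{V}_{t_2}$. Your exposition is in fact slightly more explicit than the paper's about why the hypothesis $|t_1-t_2|\leq\L$ is needed and about the index alignment, but the argument is the same.
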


\begin{proof}[Proof of Proposition \ref{pro:rankKv}]  Recall from (\ref{eqn:nu2}) that $\Kv(\Atk) \define \mbox{\rm \Var}\{\Vtk |\Atk, \pmb{\zeta } \}$ is the (conditional) covariance matrix of $\Vtk$. After conditioning on $\pmb{\zeta}$, the vector $\Q_i\Del\pmb{\zeta} = [\G(\randb{A}_{t_i})\S]^T \pmb{\rand{W}}_{t_i} $ is uniquely determined, see Lemma \ref{lem:eig}. Furthermore by Lemma \ref{lem:WandV}, if  $\Q_i\Del\pmb{\zeta} = [\G(\randb{A}_{t_i})\S]^T \pmb{\rand{W}}_{t_i} $ is uniquely determined then $\rand{V}_{t_j} \define A_{t_j} \cdot \hO^T \pmb{\rand{W}}_{t_j}$ is determined whenever $|t_i - t_j | \leq \L$. Thus we conclude that the only variables $\rand{V}_{t_i}$ that may contribute to the rank of $\Kv(\Atk)$, must be those with corresponding $t_i$ that are separated from all other $\{t_1,t_2,\cdots,t_\kappa \}\setminus \{ t_i\}$ by greater than $\L$.
\end{proof}

\renewcommand{\tt}{{t_i}}

\begin{rem} \label{rem:smooth}
From the expression for $F_{\Xtk-\Ytk}(\mat{r})$ in Theorem \ref{thm:1}, the distribution function $F_{\Xtk-\Ytk}(\mat{r})$ must be left-continuous~\cite{Papoulis:Text}, if the $\rank(\Kv(\Atk))=n$.
\end{rem}

\begin{table}
	\renewcommand{\arraystretch}{.8}
	\centering
	\caption{Various ISI channels in magnetic recording~\cite{PR}} 
		\begin{tabular}[t]{|c|rrr|c|}
			\hline \multirow{2}{*}{Channel} &  \multicolumn{3}{c|}{Coefficients} & Memory \\ &  $h_0$ & $h_1$ & $h_2$ & Length $\ell$\\ \hline
			PR1 &  $1$ & $1$ & - & 1\\ \hline
			Dicode & $1$ & $-1$ & - & 1\\ \hline
			PR2 & $1$ & $2$  & $1$  & 2\\ \hline
			PR4 & $1$ & $0$  & $-1$  & 2\\ \hline
		\end{tabular}
		
		\label{tab:1}
\end{table}

\begin{figure*}[!t]
	\centering
		\includegraphics[width=.8\linewidth]{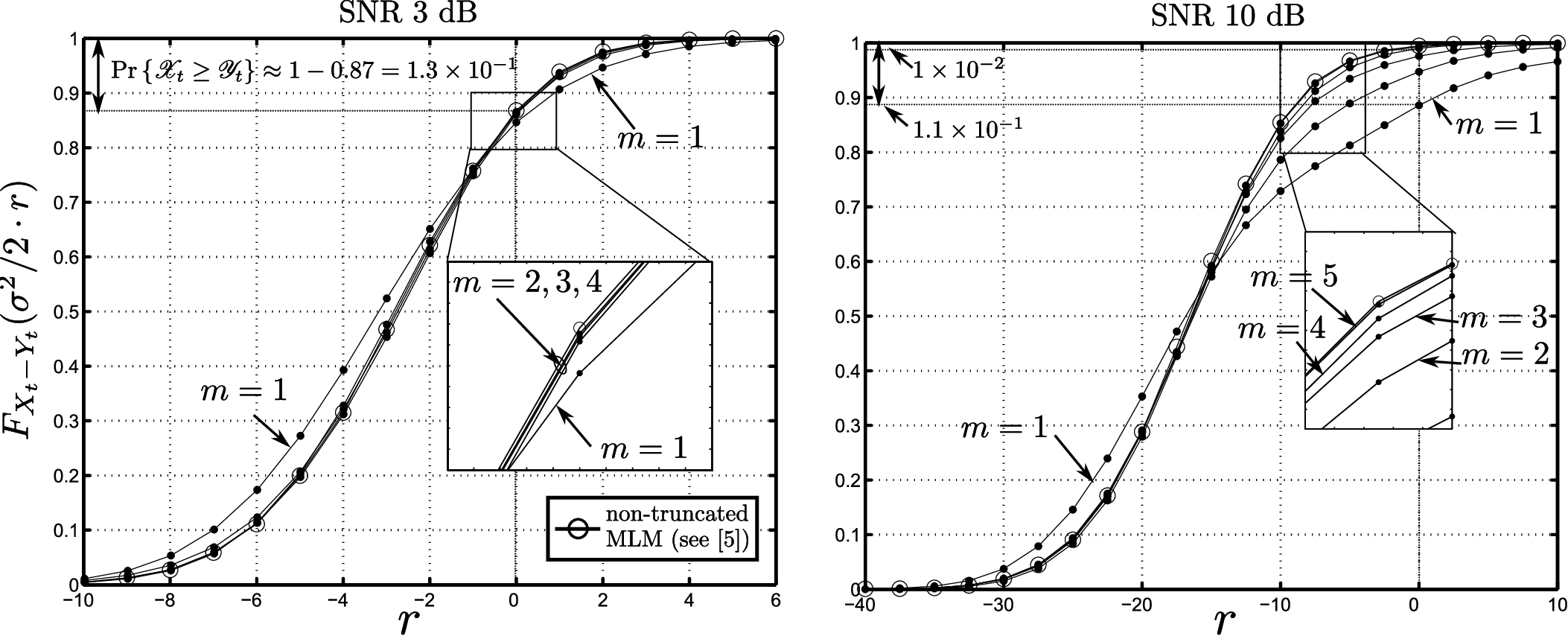}
		\caption{Marginal reliability distribution $F_{\rand{X}_t-\rand{Y}_t}(\sig^2/2 \cdot r)$ computed for the PR1 channel (see Table \ref{tab:1}). Truncation lengths $\L$ are varied from $1$ to $5$.}
	\label{fig:marg_dicode}
\end{figure*}

We conclude this section by verifying the correctness of Procedure \ref{proce:FT}, used to evaluate $F_{\Xtk-\Ytk}(\mat{r})$ when candidate subsets $\Syms \subset \Sym$ (see (\ref{eqn:excl})) are considered. The only difference between Procedures \ref{proce:pdfXmY} and \ref{proce:FT}, is that Line 3 of Procedure \ref{proce:FT} replaces Line 4 of Procedure \ref{proce:pdfXmY}. First verify that the following equality of sets is true
\begin{align}
&\Ev{\mat{a} \in \Syms_{t_i} : a_0 \neq A_{t_i}} \nn &~~~= \Ev{\Alp(\Mat{E}\s_k + \mat{e}_0, \randb{A}_{t_i})\in \Syms_{t_i} : 0 \leq k \leq 2^{2m}-1},\nn
&\Ev{\mat{a} \in \Syms_{t_i} : a_0 = A_{t_i}} \nn &~~~= \Ev{\Alp(\Mat{E}\s_k , \randb{A}_{t_i})\in \Syms_{t_i} : 0 \leq k \leq 2^{2m}-1}, \label{eqn:mod2}
\end{align}
where here the function $\Alp(\mat{e}, \randb{A}_{t_i})$ is given in Line 3 of Procedure \ref{proce:FT}. Next perform the following verifications in the order presented:
\bitm
\item Replace $\Sym$ by $\Syms_{t_i}$ in the definitions of $R_{t_i}$ in (\ref{relt}). Replace $\Sym$ by $\Syms_{t_i}$ in both $X_{t_i}$ and $Y_{t_i}$ in (\ref{eqn:XaY}). The validity of Proposition \ref{relprop} remains unaffected.
\item Replace $\Sym$ by $\Syms_{t_i}$ in the proof of Proposition \ref{pro:XaYeqdist}. The change first affects the proof starting from (\ref{eqn:XaY01}), and (\ref{eqn:mod}) needs to be slightly modified using (\ref{eqn:mod2}). The new Proposition \ref{pro:XaYeqdist} finally reads 
\begin{align}
\rand{X}_\tt &= \max_{k :\; \Alp(\Mat{E}\s_k + \mat{e}_0, \randb{A}_\tt) \in \Syms_\tt} \s_k^T [\G(\randb{A}_\tt)]^T \pmb{\rand{W}}_\tt\nn & \quad+ \nu_k(\randb{A}_\tt)  +   \v_\tt  + \theta(\randb{A}_\tt),\nn
\rand{Y}_\tt &= \max_{k :\; \Alp(\Mat{E}\s_k, \randb{A}_\tt) \in \Syms_\tt} \s_k^T [\G(\randb{A}_\tt)]^T \pmb{\rand{W}}_\tt + \mu_k(\randb{A}_\tt).\nonumber
\end{align}
\item Utilize the new Proposition \ref{pro:XaYeqdist} in the proof of Theorem \ref{thm:1}. The change first affects the proof starting from (\ref{eqn:mod3}). Proceeding from (\ref{eqn:main1})-(\ref{eqn:main001}) we arrive at the new formulas
\bea
\d_i &=& \d_i(\U,\Atk) \nn &=& \max_{k :\; \Alp(\Mat{E}\s_k + \mat{e}_0, \randb{A}_\tt) \in \Syms_\tt} \s_k^T \Q_i \Del \U + \nu_k(\randb{A}_\tt) \nn
&& - \max_{k :\; \Alp(\Mat{E}\s_k, \randb{A}_\tt) \in \Syms_\tt} \s_k^T \Q_i \Del \U + \mu_k(\randb{A}_\tt). \nonumber
\eea
This is exactly the way $\d_i$ is computed in Procedure \ref{proce:FT}, Line 3.
\eitm
This concludes our verification of Procedure \ref{proce:FT}.

\section{Numerical Computations} \label{sect:egs}
\newcommand{\dist}{F_{\Xtk-\Ytk}(\sig^2/2\cdot \mat{r})}

We now present numerical computations performed for various ISI channels. To demonstrate the generality of our results, various cases will be considered. Both i) the reliability distribution $F_{\Rtk}(\mat{r})$ and ii) the symbol error probability $\PSE$ will be graphically displayed in the following manner. Recall from Corollaries \ref{cor:main} and  \ref{cor:main2} that we have $F_{\Rtk}(\mat{r})=F_{|\Xtk-\Ytk|}(\sig^2/2\cdot \mat{r})$ (here $\sig^2$ denotes the noise variance in (\ref{eqn:snr})) and $\PSE = \Pr{\Xtk \geq  \Ytk}$. Therefore, both quantities i) and ii) will be displayed utilizing a \emph{single} graphical plot of $\dist$.

The chosen ISI channels for our tests are given in Table \ref{tab:1}; these are commonly-cited channels in the magnetic recording literature~\cite{PR,Immink}.
Define the signal-to-noise (SNR) ratio as $10\log_{10} ( \sum_{i=0}^\ell h_i^2 /\sig^2)$. The input symbol distribution $\Pr{\randb{A}_t=\ati}$ will always be uniform, i.e., $\Pr{\randb{A}_t=\ati}=2^{-2(m+\ell)-1}$ see (\ref{eqn:sig_vect}), unless stated otherwise.

\newcommand{\marg}{F_{X_t-Y_t}(\sig^2/2 \cdot r)}
\subsection{Marginal distribution when the noise is i.i.d.} \label{ssect:marg1}

First, consider the case where the noise samples $\rand{W}_t$ are i.i.d, thus $\sig^2 = \E\{\rand{W}_t^2\}$. Figure \ref{fig:marg_dicode} shows the marginal distribution $\marg$ computed for the PR1 channel (see Table \ref{tab:1}) with memory $\ISI=1$.
The distribution is shown for various truncation lengths $\L=1$ to $5$, and two different SNRs : 3 dB and 10 dB. At SNR 3 dB, we observe that with the exception of $\L=1$, all curves appear to be extremely close. At SNR 3 dB, a good choice for the truncation length $\L$ appears to be $\L=2$; the computed distribution for $\L=2$ appears close to the simulated distribution. 
At SNR 10 dB, it appears that $\L=5$ is a good choice. The probability of symbol error $\Pr{\rand{B}_t \neq \rand{A}_t}=\Pr{\rand{X}_t \geq \rand{Y}_t} = 1 - F_{X_t - Y_t}(0)$ is observed to decrease as the truncation length $\L$ increases; this is expected. At SNR 3 dB, the (error) probability $\Pr{\rand{X}_t \geq \rand{Y}_t} = 1 - F_{X_t - Y_t}(0) \approx 1.4 \times 10^{-1}$ for truncation lengths $\L > 1$. For SNR 10 dB, the (error) probability $\Pr{\rand{X}_t \geq \rand{Y}_t} $ is seen to vary significantly for both truncation lengths $\L=1$ and $5$; the probability $\Pr{\rand{X}_t \geq \rand{Y}_t} \approx 1.1 \times 10^{-1} $ and $1 \times 10^{-2}$ for $m=1$ and $5$, respectively.

For the PR1 channel and a fixed truncation length $\L = 4$, the marginal distributions $\marg$ are compared across various SNRs in Figure \ref{fig:marg_comp}. As SNR increases, the distributions $\marg$ appear to concentrate more probability mass over negative values of $X_t-Y_t$. This is intuitively expected, because as the SNR increases, the symbol error probability $\Pr{\rand{B}_t \neq \rand{A}_t}=\Pr{\rand{X}_t \geq \rand{Y}_t} = 1 - F_{X_t - Y_t}(0)$ should decrease. From Figure \ref{fig:marg_comp}, the (error) probabilities $\Pr{\rand{X}_t \geq \rand{Y}_t}$ are found to be approximately $ 1.2 \times 10^{-1}, 8 \times 10^{-2}, 3 \times 10^{-2}$, and $1 \times 10^{-2}$, respectively for SNRs 3 to 10 dB.

\begin{figure}[!t]
	\centering
		\includegraphics[width=.9\linewidth]{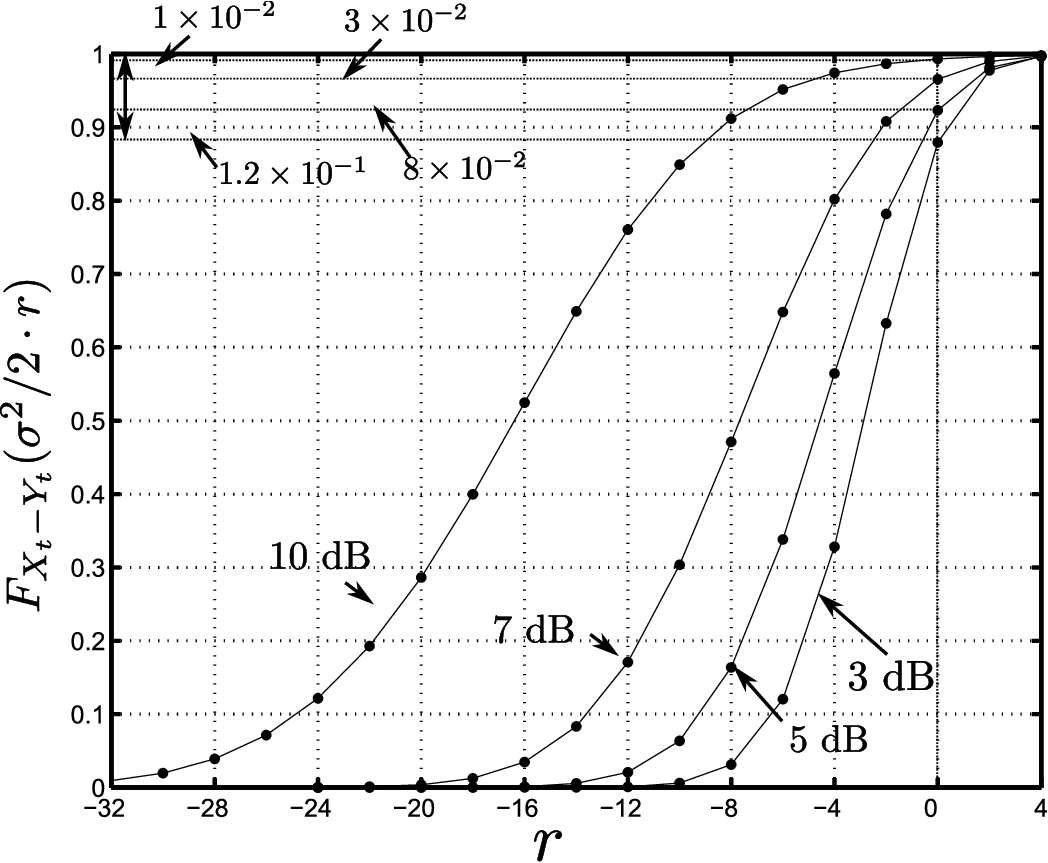}
	\caption{Comparing the distributions $\marg$ across different SNRs, for a fixed truncation length $\L=5$. The channel is the PR1 channel, see Table \ref{tab:1}.}
	\label{fig:marg_comp}
\end{figure}

\newcommand{\joint}{F_{\randb{X}_{\mathbf{t}_1^2}-\randb{Y}_{\mathbf{t}_1^2}}(\sig^2/2 \cdot \mathbf{r})}
\newcommand{\jointT}{F_{\randb{X}_{\mathbf{t}_1^2}-\randb{Y}_{\mathbf{t}_1^2}}(\sig^2/2 \cdot [r_1,r_2]^T)}

\begin{figure*}[t]
	\centering
		\includegraphics[width=.7\linewidth]{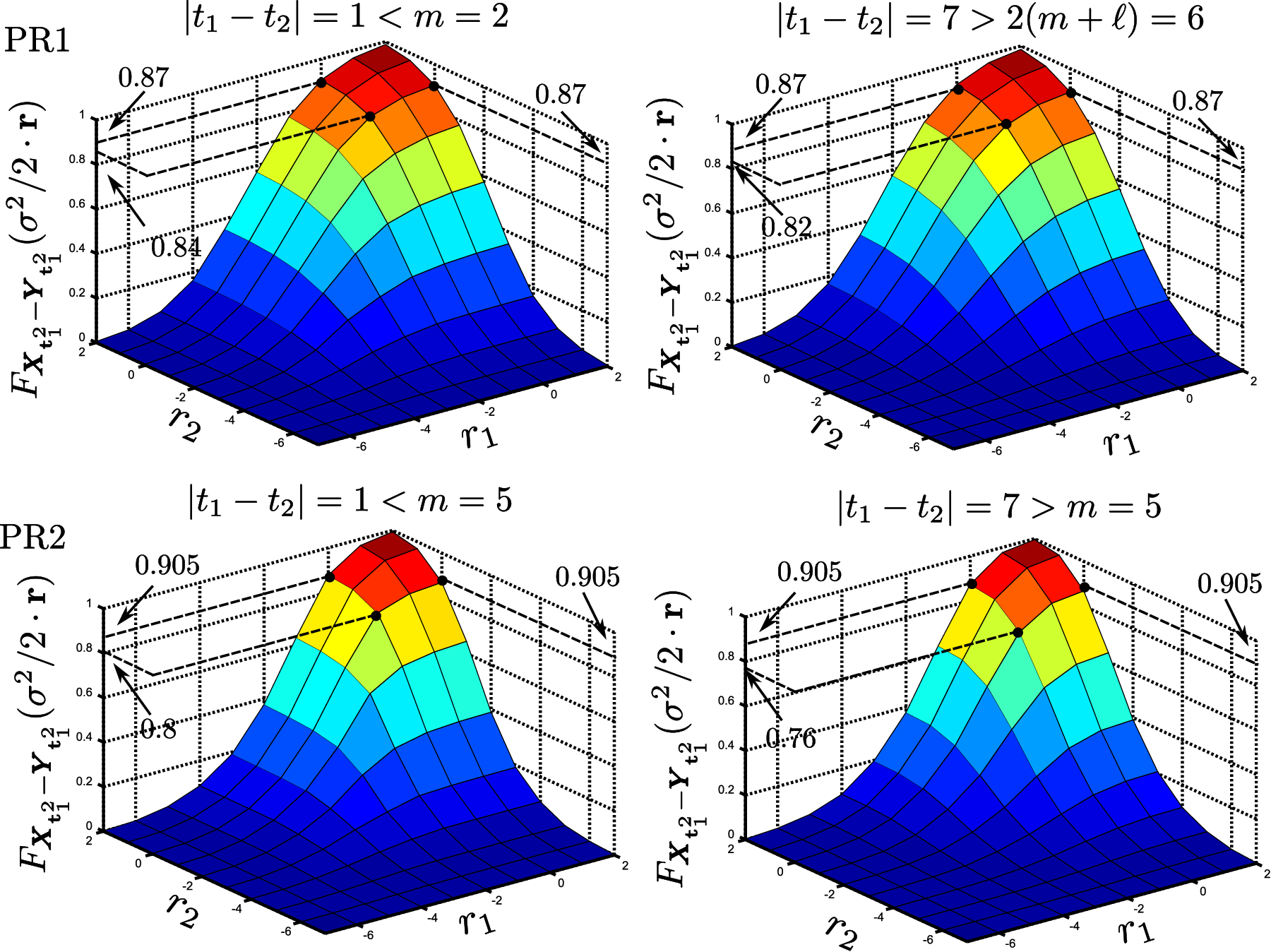}
	\caption{Joint reliability distribution $\joint$ computed for both the PR1 and PR2 channels, with chosen truncation lengths $\L=2$ and $5$. }
	\label{fig:joint_pr2}
\end{figure*}

\subsection{Joint distribution for $n=2$ case, when the noise is i.i.d.}

We consider again i.i.d noise $\rand{W}_t$, and the PR1 and PR2 channels (see Table \ref{tab:1}).
Here, we choose the SNR to be moderate at 5 dB. For the PR1 channel with memory length $\ISI=1$, the truncation length is fixed to be $\L=2$. 
For the PR2 channel with $\ISI=2$, we fix $\L=5$. 
Figure \ref{fig:joint_pr2} compares the joint distributions $\joint$, computed for both PR1 and PR2 channels and for both time lags $|t_1 - t_2 |=1$ (i.e., neighboring symbols) and $|t_1 - t_2 |=7$. 
The difference between the two cases $|t_1 - t_2 |=1$ and $7$ is subtle (but nevertheless inherent) as observed from the differently labeled points in the figure. For the PR1 channel, the joint symbol error probability $\Pr{\rand{B}_{t_1}\neq \rand{A}_{t_1},\rand{B}_{t_2}\neq \rand{A}_{t_2}}= \Pr{\randb{X}_{\mat{t}_1^2} \geq \randb{Y}_{\mat{t}_1^2}} $ is approximately $ 6 \times 10^{-2}$ and $2 \times 10^{-2} $ for both cases $|t_1 - t_2 |=1$ and $7$, respectively. Similarly for the PR2, the (error) probability is approximately $ 3 \times 10^{-2}$ and $1 \times 10^{-2}$ for both respective cases $|t_1 - t_2 |=1$ and $7$. Finally note that for the PR1 channel when $|t_1 - t_2 | = 7 $, both MLM reliability values $R_{t_1} = 2/\sig^2 \cdot |X_{t_1} - Y_{t_1}|$ and $R_{t_2} = 2/\sig^2 \cdot |X_{t_2} - Y_{t_2}|$ are \emph{independent}; this is because then $|t_1 - t_2 | = 7 > 2(\L+\ISI) = 6$, refer to Figure \ref{fig:Trellis}.

\begin{figure*}[t]
	\centering
		\includegraphics[width=.75\linewidth]{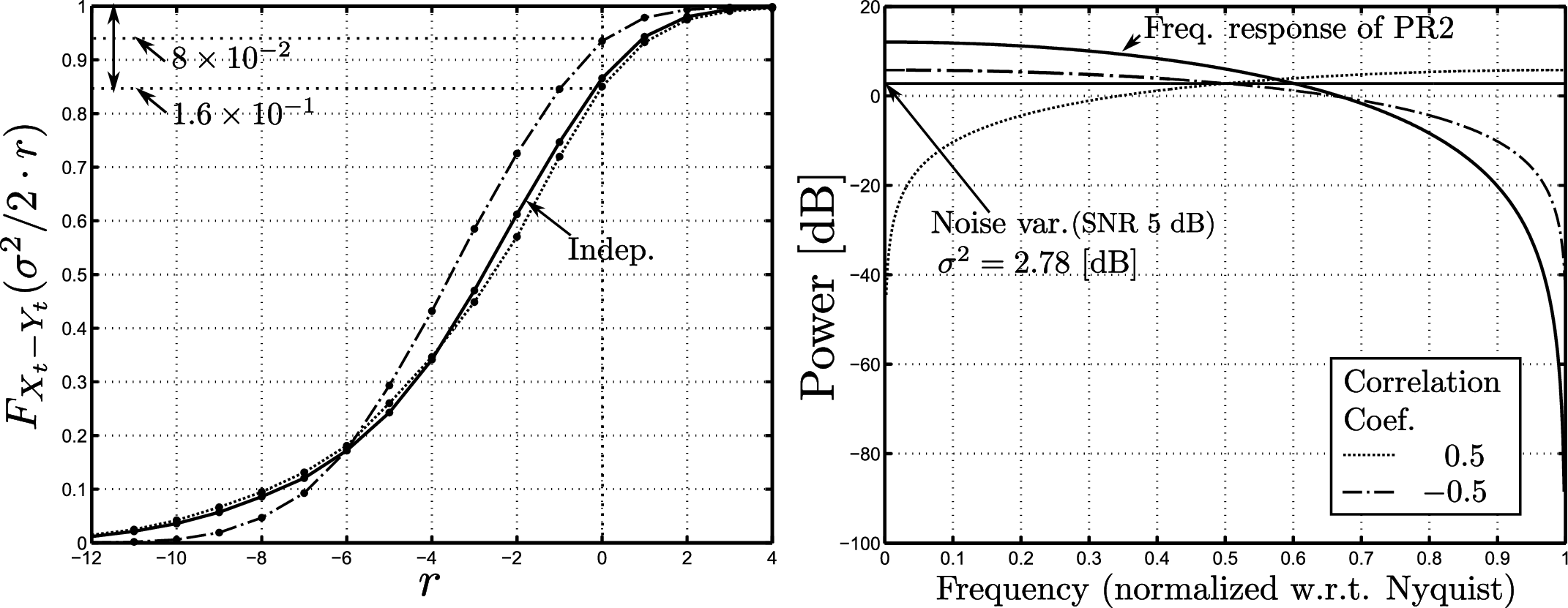}
	\caption{Marginal distribution $\marg$ for correlated noises, for the PR2 channel, at SNR $5$ dB. Truncation length $\L = 5$.}
	\label{fig:marg_corr} 
\end{figure*}

\newcommand{\tp}{\bar{t}}
\newcommand{\corr}{\E \Ev{W_t \cdot W_{t+1}} / \sig^2}

\subsection{Marginal distribution when the noise is correlated.} \label{ssect:corr}

Consider the PR2 channel, and now consider the case where the noise samples $\rand{W}_t$ are \emph{correlated}. For simplicity of argument we consider single lag correlation, i.e. $\E \Ev{W_t \cdot W_{\tp}} = 0$ for all $|t-\tp| > 1$, and consider the following two cases :
\bitm
\item the \emph{correlation coefficient} $\corr = 0.5$, and 
\item the \emph{correlation coefficient} $\corr = -0.5$. 
\eitm
We consider a moderate SNR of 5 dB. Figure \ref{fig:marg_corr} shows the distributions $\marg$ computed for both cases. Also in Figure \ref{fig:marg_corr}, the \emph{power spectral densities} of the correlated noise samples $\rand{W}_t$ (see~\cite{Papoulis:Text}, p.~408) are shown for both cases. It is apparent that the truncated MLM detector performs better (i.e., smaller symbol error probability) when the correlation coefficient $\corr = -0.5$. This is explained intuitively as follows. The detector should be able to tolerate more noise in the signaling frequency region. Observe the PR2 \emph{frequency response}~\cite{PR,Immink} displayed in Figure \ref{fig:marg_corr}. When the correlation coefficient equals $\corr = -0.5$, the noise power is strongest amongst signaling frequencies, and the symbol error probability $\Pr{\rand{B}_t \neq \rand{A}_t}= \Pr{\rand{X}_t \geq \rand{Y}_t}$ is observed to be the lowest (approximately $ 8 \times 10^{-2}$). On the other hand when the correlation coefficient is $\corr = 0.5$, the noise is strongest at frequencies near the \emph{spectral null} of the PR2 channel, and the (error) probability $\Pr{\rand{X}_t \geq \rand{Y}_t} $ is the highest (approximately $ 1.6 \times 10^{-1}$). Note that in the latter case $\corr = -0.5$, the MLM performs even better than the i.i.d case, see Figure \ref{fig:marg_corr}. In the i.i.d case, the error probability $\Pr{\rand{X}_t \geq \rand{Y}_t} \approx 1.3 \times 10^{-1}$.

\begin{rem}
One intuitively expects that similar observations will be made even for other (more complicated) choices for the noise covariance matrix $\Kw$, recall (\ref{eqn:mx_cov}). We stress that our results are general in the sense that we may arbitrarily specify $\Kw$; even if the noise samples $W_t$ are \textbf{non-stationary} our methods still apply.
\end{rem}

\begin{figure*}[!t]
	\centering
		\includegraphics[width=.8\linewidth]{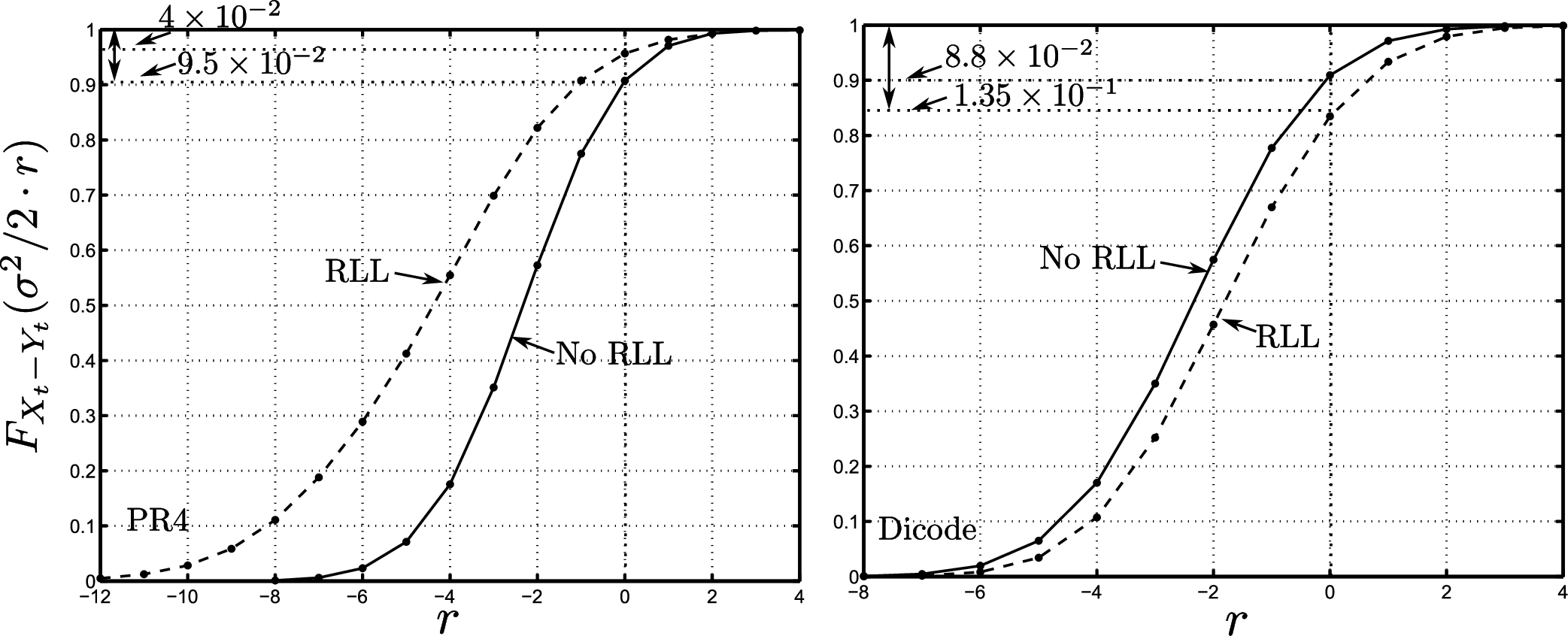}
		\caption{Marginal distributions $\marg$ computed for cases when a run-length limited (RLL) code is present. Here, we compare both the PR4 and dicode (see Table \ref{tab:1}) channels at SNR 5 dB.}
	\label{fig:marg_rll}
\end{figure*}

\subsection{Marginal distribution when the noise is i.i.d., and when run-length limited (RLL) codes are used.} \label{ssect:RLL}

We demonstrate Procedure \ref{proce:FT} in Subsection \ref{ssect:33}, used to compute the distribution $\marg$ when a modulation code is present in the system. In particular, consider a \emph{run-length limited (RLL)} code; we test the simple RLL code that prevents neighboring symbol transitions~\cite{RLL,Immink}. This code improves transmission over ISI channels, that have spectral nulls near the Nyquist frequency~\cite{Immink}; one such channel is the PR4, see Table \ref{tab:1}. Figure \ref{fig:marg_rll} shows $\marg$ computed for both the PR4, as well as the dicode channel, see Table \ref{tab:1}. 
The PR4 channel has a spectral null at Nyquist frequency (recall Subsection \ref{ssect:corr}), but the dicode channel does not.

It is clearly seen from Figure \ref{fig:marg_rll} that the RLL code improves the performance when used for the PR4 channel. For the PR4 channel, the distribution $\marg$ appears to concentrate more probability mass over negative values of $X_t - Y_t$ similar to the observations made in Figure \ref{fig:marg_comp} when there is an increase in SNR. The error probability $\Pr{\rand{B}_t\neq\rand{A}_t}=\Pr{\rand{X}_t \geq \rand{Y}_t} = 1 - F_{X_t -Y_t}(0) $ decreases by a factor of 2, dropping from approximately $9.5 \times 10^{-2}$ to $4 \times 10^{-2}$. On the other hand, the RLL code has a negative impact on the performance when applied to the dicode channel. For the dicode channel, $F_{\rand{{X}}_t-{\rand{Y}}_t}(r)$ concentrates more probability mass over positive values of $X_t - Y_t$ (similar to the observations made in Figure \ref{fig:marg_comp} when there is an SNR decrease), and the (error) probability $\Pr{\rand{X}_t \geq \rand{Y}_t} $ increases from approximately $8.8 \times 10^{-2}$ to $1.35\times 10^{-1}$.

\subsection{Marginal distribution when conditioning on neighboring error events}

\newcommand{\Xtt}[1]{\randb{X}_{#1}}
\newcommand{\Ytt}[1]{\randb{Y}_{#1}}
\newcommand{\rr}{r}

Here we consider three \emph{neighboring} symbol reliabilities, i.e., we consider $\randb{R}_{\mat{t}_1^3} = [\rand{R}_{t-1},\rand{R}_t, \rand{R}_{t+1}]^T$. We consider the following two conditional distributions :

\begin{align}
\mbox{(a)}&~~~~\Pr{\rand{X}_t - \rand{Y}_t \leq \rr | \rand{X}_{t-1} < \rand{Y}_{t-1}, \rand{X}_{t+1} < \rand{Y}_{t+1}} \nn &= \frac{1}{\C_1} \cdot  F_{\Xtt{\mat{t}_1^3}-\Ytt{\mat{t}_1^3}}(0,\rr,0),~\mbox{and }\nn
\mbox{(b)}&~~~~\Pr{\rand{X}_t - \rand{Y}_t \leq  \rr | \rand{X}_{t-1} \geq \rand{Y}_{t-1}, \rand{X}_{t+1} \geq \rand{Y}_{t+1}}\nn &= \frac{1}{\C_2} \left(F_{\rand{X}_t-\rand{Y}_t}(\rr) - F_{\Xtt{\mat{t}_2^3}-\Ytt{\mat{t}_2^3}}(\rr,0)\right.  \nn
&~~~~~~~~~~ \left. - F_{\Xtt{\mat{t}_1^2}-\Ytt{\mat{t}_1^2}}(0,r) +  F_{{\Xtt{\mat{t}_1^3}-\Ytt{\mat{t}_1^3}}}(0,r,0) \right), \nonumber
\end{align}
where the normalization constants $\C_1$ and $\C_2$ equal the probabilities of the (respective) events that were conditioned on. 
Distribution (a) is conditioned on the event that both neighboring symbols are \emph{correct}. i.e., $\Ev{\rand{B}_{t-1} = \rand{A}_{t-1}, \rand{B}_{t+1} = \rand{A}_{t+1}}$. Distribution (b) is conditioned on the event that both neighboring symbols are \emph{wrong}, i.e., $\Ev{\rand{B}_{t-1} \neq \rand{A}_{t-1}, \rand{B}_{t+1} \neq  \rand{A}_{t+1}}$. For the PR1, PR2 and PR4 channels, both conditional distributions (a) and (b) are shown in Figures \ref{fig:triple_comp2} and \ref{fig:triple_comp}. We compare two different SNRs 3 and 10 dB. For comparison purposes, we also show the \emph{unconditioned} distribution $\dist$ in both Figures \ref{fig:triple_comp2} and \ref{fig:triple_comp}. We make the following observations.

In all considered cases, distribution (a) is seen to be similar to the {unconditioned} distribution. However, distribution (b) is observed to vary for all the considered cases. Take for example the PR2 channel, we see from Figure \ref{fig:triple_comp} that distribution (b) has probability mass concentrated to the right of the unconditioned $\marg$. 
This is true for both SNRs 3 and 10 dB.
In contrast for the PR1, the MLM detector behaves differently at the two SNRs. We see from Figure \ref{fig:triple_comp2} that at SNR 10 dB, the distribution (b) has a lower symbol error probability than that of the unconditioned $\marg$. At SNR 3 dB however, the opposite is observed, i.e., the symbol error probability is higher than that of the distribution $\marg$. This is because at SNR 10 dB, errors occur \emph{sparsely}, interspaced by correct symbols; it is uncommon to encounter \emph{consecutive} symbols in error. Hence conditioned on adjacent symbols $B_{t-1}$ and $B_{t+1}$ being wrong, it is uncommon for $B_t$ to be also wrong, as this is the event where we have three consecutive erroneous symbols. 
Finally, the observations made for the PR4 channel are again different. 
We notice that both distributions (a) and (b) practically equal the unconditioned distribution $\marg$. This is because the even/odd output subsequences of the PR4 channel are independent of each other.

\begin{figure}[!t]
	\centering
		\includegraphics[width=.9\linewidth]{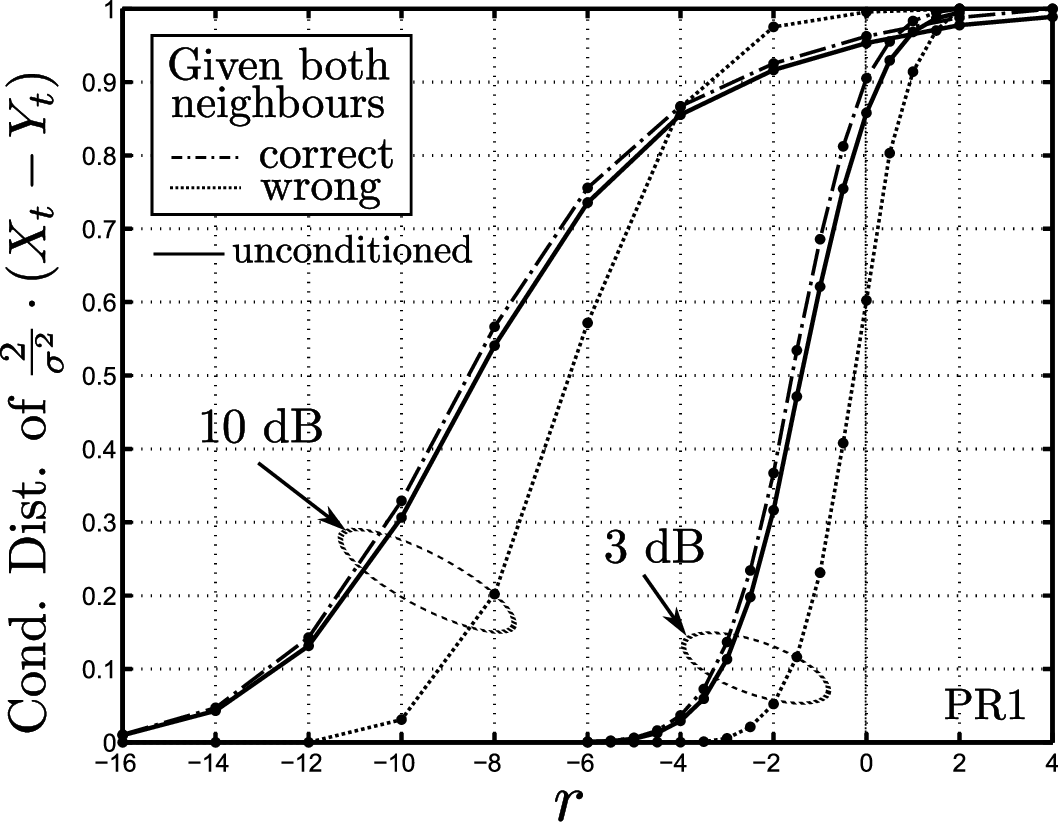}
	\caption{Marginal distributions of ${\rand{X}_t-\rand{Y}_t}$ computed for the PR1 channel, obtained when conditioning on either events $\Ev{\rand{B}_{t-1} \neq \rand{A}_{t-1}, \rand{B}_{t+1} \neq \rand{A}_{t+1}}$ and $\Ev{\rand{B}_{t-1} = \rand{A}_{t-1}, \rand{B}_{t+1} = \rand{A}_{t+1}}$. These two events correspond to error (or non-error) events at neighboring time instants $t-1$ and $t+1$. The solid black line represents the unconditioned marginal distribution of ${\rand{X}_t-\rand{Y}_t}$. }
	\label{fig:triple_comp2}
\end{figure}

\begin{figure*}[!t]
	\centering
		\includegraphics[width=.8\linewidth]{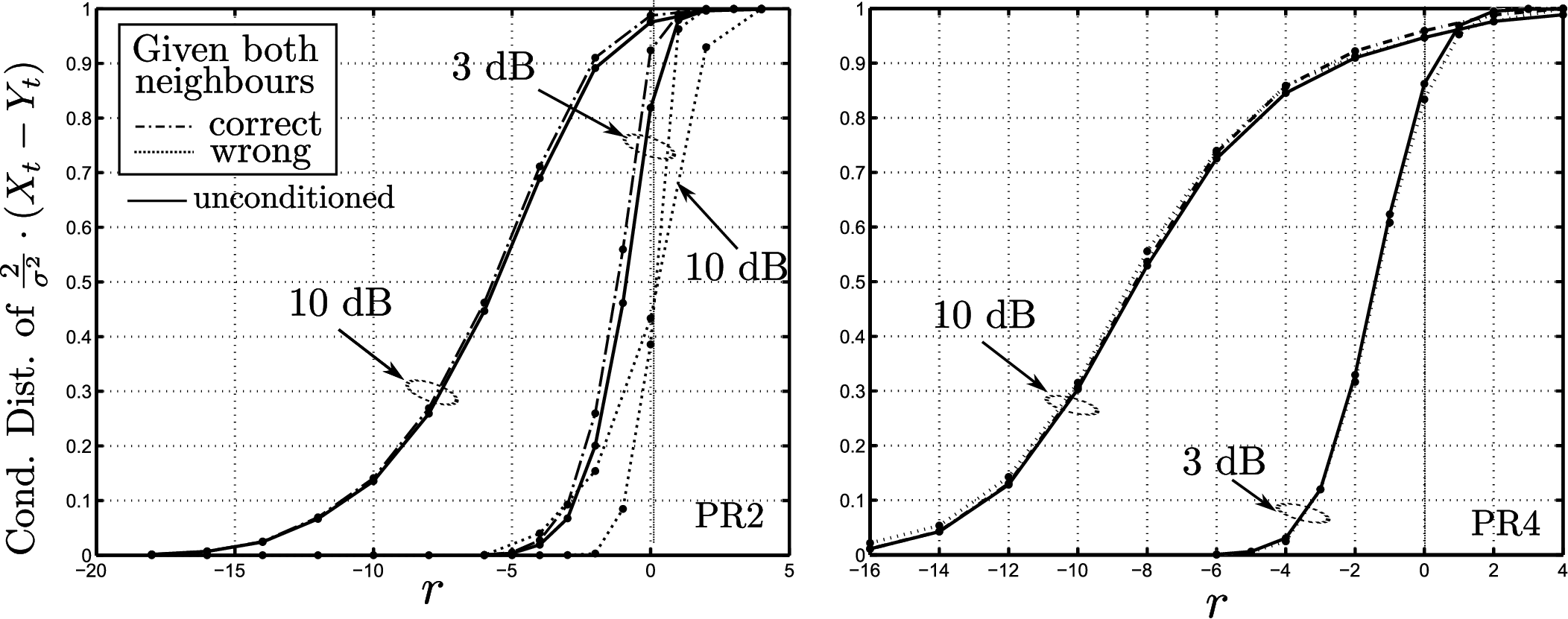}
		\caption{Marginal distributions of ${\rand{X}_t-\rand{Y}_t}$ computed for both the PR2 and PR4 channels, obtained when conditioning on either events $\Ev{\rand{B}_{t-1} \neq \rand{A}_{t-1}, \rand{B}_{t+1} \neq \rand{A}_{t+1}}$ and $\Ev{\rand{B}_{t-1} = \rand{A}_{t-1}, \rand{B}_{t+1} = \rand{A}_{t+1}}$. These two events correspond to error (or non-error) events at neighboring time instants $t-1$ and $t+1$. The solid black line represents the unconditioned marginal distribution of $2/\sig^2 \cdot (\rand{X}_t-\rand{Y}_t)$. }
	\label{fig:triple_comp}
\end{figure*}

\section{Conclusion} \label{sect:con}

In this paper for the $\L$-truncated MLM detector, we derived closed-form expressions for both i) the reliability distributions $\dist$, and ii) the symbol error probabilities $\PSE$. Our results hold jointly for any number $n$ of arbitrarily chosen time instants $t_1,t_2,\cdots, t_n$. The general applicability of our result has been demonstrated for a variety of scenarios. Efficient Monte-Carlo procedures that utilize dynamic programming simplifications have been given, that can be used to numerically evaluate the closed-form expressions. 

It would be interesting to further generalize the exposition to consider \emph{infinite impulse response} (IIR) filters, such as in \emph{convolutional codes}.

\section*{Acknowledgment}
The authors would like to thank the Associate Editor, and the anonymous reviewers, for their help in greatly improving the presentation of our initial draft.

\appendix

\subsection{Computing the matrix $\mathbf{Q}= \mathbf{Q}(\textbf{A}_{\mathbf{t}_1^n})$ in Definition \ref{defn:QD}} \label{app:spec}

\renewcommand{\m}{n}
\renewcommand{\F}{\pmb{\mathcal{C}}}

In this appendix, we show that the size $2mn$ square matrix $\Q$ with both properties i) and ii) as stated in Definition \ref{defn:QD}, can be easily found. We begin by noting from (\ref{eqn:SS}) that $\rank(\S\S^T)=2m$, therefore the matrix $\I_n \otimes \S\S^T$ has rank $2mn$ and is \emph{positive definite}. Recall $\bdiag\left( \GMats \right)$ is {block diagonal } with entries~(\ref{eqn:G}).

\begin{lem} \label{lem:SD}
Let $\S$ be given as in (\ref{eqn:S}). Let the size $2mn$ by $2mn$ square matrix $\Alp$ diagonalize
\bea
	\Alp^T (\I_n \otimes \S\S^T) \Alp = \I. \label{eqn:SD}
\eea
Let $\Beta$ be a $2mn \times 2mn$ eigenvector matrix $\Beta$ in the following decomposition
\begin{align}
\renewcommand{\arraystretch}{.7}
&\Alp^T(\I_n \otimes \S\S^T) \bdiag\left(\GMats\right)^T
\Kw \nn &~\cdot~
\renewcommand{\arraystretch}{.7}
\bdiag\left(\GMats\right) (\I_n \otimes \S\S^T)\Alp \nn
&=  \Beta\Del^2 \Beta^T, \label{eqn:Fmat} 
\end{align}
where $\Del^2$ is the eigenvalue matrix of (\ref{eqn:Fmat}), therefore $\Del^2$ in (\ref{eqn:Fmat}) is diagonal of size $2mn$. Then
\bea
	\Q = \Alp \Beta
\eea
satisfies both properties i) and ii) stated in Definition \ref{defn:QD}.
\end{lem}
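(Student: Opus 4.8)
The plan is to verify the two defining properties of $\Q$ in Definition~\ref{defn:QD} by directly substituting $\Q=\Alp\Beta$, the only real ingredient being invertibility of $\Alp$ and of $\I_n\otimes\S\S^T$.

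First I would record two auxiliary facts. Set $\Mat{M}\define\I_n\otimes\S\S^T$. By (\ref{eqn:SS}) we have $\rank(\S\S^T)=2m$, so $\Mat{M}$ is positive definite, hence invertible; consequently any square matrix $\Alp$ satisfying (\ref{eqn:SD}), i.e.\ $\Alp^T\Mat{M}\Alp=\I$, is itself invertible, and rearranging (\ref{eqn:SD}) gives $\Mat{M}=(\Alp\Alp^T)^{-1}$, equivalently $\Alp\Alp^T\Mat{M}=\Mat{M}\Alp\Alp^T=\I$. Secondly, the eigenvector matrix $\Beta$ of the symmetric matrix on the left of (\ref{eqn:Fmat}) may be taken orthonormal, so that $\Beta^T\Beta=\I$; moreover, since $\Kw\succeq0$, that matrix is positive semidefinite, so its eigenvalue matrix $\Del^2$ has nonnegative diagonal and $\Del$ (its nonnegative square root) is well defined.

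Property~ii) is then immediate: $\Q^T(\I_n\otimes\S\S^T)\Q=\Beta^T\Alp^T\Mat{M}\Alp\Beta=\Beta^T\Beta=\I$. For property~i), substituting $\Q=\Alp\Beta$ and using (\ref{eqn:Fmat}) to rewrite $\Beta\Del^2\Beta^T$ yields
\[
\Q\Del^2\Q^T=\Alp(\Beta\Del^2\Beta^T)\Alp^T=\Alp\Alp^T\Mat{M}\,\bdiag\left(\GMats\right)^{T}\Kw\,\bdiag\left(\GMats\right)\Mat{M}\Alp\Alp^T,
\]
which collapses, via $\Alp\Alp^T\Mat{M}=\Mat{M}\Alp\Alp^T=\I$, to $\bdiag\left(\GMats\right)^{T}\Kw\,\bdiag\left(\GMats\right)$ --- precisely the right-hand side of (\ref{eqn:QDQ}). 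The remaining rank requirement follows because $\Alp$ and $\Mat{M}$ are invertible, so the matrix decomposed in (\ref{eqn:Fmat}) has the same rank as $\bdiag\left(\GMats\right)^{T}\Kw\,\bdiag\left(\GMats\right)$; hence the number of strictly positive diagonal entries of $\Del$ equals that rank, as Definition~\ref{defn:QD} demands.

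There is no deep obstacle here --- the whole argument is a short linear-algebra calculation --- so the part needing the most care is the bookkeeping: confirming that $\Alp$ and $\I_n\otimes\S\S^T$ are genuinely invertible (so the step $\Alp\Alp^T\Mat{M}=\I$ is legitimate), and keeping the lemma's orthonormal eigenvector matrix $\Beta$ notationally distinct from the block-diagonal matrix $\bdiag\left(\GMats\right)$, which is abbreviated $\Beta(\Atk)$ in (\ref{eqn:short}).
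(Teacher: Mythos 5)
Your proposal is correct and follows essentially the same route as the paper's own proof: both hinge on the invertibility of $\Alp$ (your identity $\Alp\Alp^T(\I_n\otimes\S\S^T)=\I$ is just the paper's $\Alp^{-1}=\Alp^T(\I_n\otimes\S\S^T)$ in disguise), then verify properties i) and ii) by direct substitution of $\Q=\Alp\Beta$ and orthonormality of $\Beta$. Your additional check that the number of positive diagonal entries of $\Del$ equals the rank of the right-hand side of (\ref{eqn:QDQ}) is a detail the paper leaves implicit, and is handled correctly.
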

\begin{proof}
Because $\Alp$ diagonalizes $\I_n \otimes \S\S^T$ to an identity matrix $\I$, it follows that $\Alp$ must have full rank, and thus have an inverse $\Alp^{-1}$. It follows from (\ref{eqn:SD}) that $\Alp^{-1} = \Alp^T(\I_n\otimes \S\S^T)$. Replacing $\Alp^T(\I_n\otimes \S\S^T) = \Alp^{-1}$ in (\ref{eqn:Fmat}), we see that $\Beta$ satisfies
\begin{align}
&\Alp^{-1} \bdiag\left(\GMats\right)^T
\Kw \nn ~&~\cdot
\bdiag\left(\GMats\right) \Alp^{-T} =  \Beta\Del^2 \Beta^T. \label{eqn:Fmat2} 
\end{align}
Consider the matrix $\Q=\Alp\Beta$. It follows from (\ref{eqn:Fmat2}) that $\Q=\Alp\Beta$ satisfies property i) in Definition \ref{defn:QD}, as seen after multiplying (the matrices satisfying) (\ref{eqn:Fmat2}) on the left and right by $\Alp$ and $\Alp^T$, respectively. It also follows that $\Q=\Alp\Beta$ satisfies property ii) in Definition \ref{defn:QD}, this is because
\[
\Q^T (\I_n\otimes \S\S^T) \Q = \Beta^T \Alp^T(\I_n\otimes \S\S^T) \Alp \Beta  = \Beta^T \Beta = \I, 
\] 
where the last equality follows because $\Beta$ is \emph{unitary} (i.e., $\Beta^{-1} = \Beta^T$) by virtue of the fact that it is an eigenvector matrix~\cite{Golub}, p.~311. 
\end{proof}

To summarize Lemma \ref{lem:SD}, the matrix $\Q= \Q(\Atk)$ in Definition \ref{defn:QD}, is obtained by first computing two size $2mn$ matrices $\Alp$ and $\Beta$ respectively satisfying (\ref{eqn:SD}) and (\ref{eqn:Fmat}), and then setting $\Q= \Alp\Beta$. The matrix $\Beta$ is obtained from an eigenvalue decomposition of the $2mn$ matrix (\ref{eqn:Fmat}), and clearly $\Beta$ depends on the symbols $\Atk$. The matrix $\Alp$ however, is simpler to obtain. This is due to the simple form of $\S\S^T$ in (\ref{eqn:SS}), and we may even obtain closed form expressions for $\Alp$, see the next remark.

\begin{rem}
It can be verified that the following are eigenvectors of the matrix $\S\S^T$ in (\ref{eqn:SS}). The first $2m-1$ eigenvectors are
\bea
(i + i^2)^{-\frac{1}{2}} \cdot [\overbrace{1, 1, \cdots, 1}^{i}, -i, \overbrace{0, 0, \cdots, 0}^{2m-(i+1)}]^T \nonumber 
\eea
where $i$ can take values $1 \leq i < 2m$, and the last eigenvector is simply $\1/|\1|= \1/(2m)$.  
\end{rem}

\bibliographystyle{IEEEtran}

}

\begin{IEEEbiographynophoto}{Fabian Lim}
received the B.Eng and M.Eng degrees from the National University of Singapore in 2003 and 2006, respectively, and the Ph.D. degree from the University of Hawaii, Manoa in 2010, all in  electrical engineering. Currently, he is a postdoctoral associate at the Massachusetts Institute of Technology. 

Dr. Lim has held short-term visiting research positions at Harvard University in 2004 and 2005. From Oct 2005 to May 2006, he was a staff member in the Data Storage Institute, Singapore. From May 2008 to July 2008, he as an intern at Hitachi Global Storage Technologies, San Jose. In March 2009, he was a visitor at the Research Center for Information Security, Japan. 

His research interests include error-control coding and signal processing.
\end{IEEEbiographynophoto}

\begin{IEEEbiographynophoto}{Aleksandar Kav\v{c}i\'c} received the Dipl. Ing. degree in Electrical Engineering from Ruhr-University, Bochum, Germany in 1993, and the Ph.D. degree in Electrical and Computer Engineering from Carnegie Mellon University, Pittsburgh, Pennsylvania in 1998.

Since 2007 he has been with the University of Hawaii, Honolulu where he is presently Professor of Electrical Engineering. Prior to 2007, he was in the Division of Engineering and Applied Sciences at Harvard University, as Assistant Professor and Associate Professor of Electrical Engineering. He also served as Visiting Associate Professor at the City University of Hong Kong in the Fall of 2005 and as Visiting Scholar at the Chinese University of Hong Kong in the Spring of 2006.

Prof. Kav\v{c}i\'c received the IBM Partnership Award in 1999 and the NSF CAREER Award in 2000. He is a co-recipient, with X. Ma  and N. Varnica, of the 2005 IEEE Best Paper Award in Signal Processing and Coding for Data Storage. He served on the Editorial Board of the IEEE Transactions on Information Theory as Associate Editor for Detection and Estimation from 2001 to 2004, as Guest Editor of the IEEE Signal Processing Magazine in 2003-2004, and as Guest Editor of the IEEE Journal on Selected Areas in Communications in 2008-2009. From 2005 until 2007, he was the Chair of the Data Storage Technical Committee of the IEEE Communications Society.
\end{IEEEbiographynophoto}

\end{document}